% last updated in April 2002 by Antje Endemann
% Based on CPP 07 and LANS, with modifications by DAB, AZ and ell, 2008 and AA, 2010

\documentclass[11pt,onecolumn]{article}
\usepackage{amsmath,amsthm,amssymb,amsfonts,epsfig,graphicx}
\usepackage{algorithm,algorithmicx}
\usepackage{color}
\usepackage{dsfont}
\usepackage{wrapfig}

\usepackage[left=1.0in,top=1.0in,right=1.05in,bottom=0.5in,nohead,textheight=10in,footskip=0.3in]{geometry}

\newtheorem{theorem}{Theorem}

\newtheorem{lemma}[theorem]{Lemma}
\newtheorem{definition}[theorem]{Definition}

\begin{document}

\long\def\ignore#1{}
\def\myparagraph#1{\vspace{2pt}\noindent{\bf #1~~}}

%\pagestyle{headings}

%%%%%%%%%%%%%%%%%%%%%%%%%%%%%%%%%%%%%%%%%%%%%%%%%%%%%%%%%%%
%%%%%%%%%%%%%%%%%%%%%%%%%%%%%%%%%%%%%%%%%%%%%%%%%%%%%%%%%%%
%%%%%%%%%%%%%%%%%%%%%%%%%%%%%%%%%%%%%%%%%%%%%%%%%%%%%%%%%%%

\newcommand{\cs}{{coalitional structure}}
\newcommand{\CS}{\mathit{CS}}
\newcommand{\calCS}{{\cal CS}}
\newcommand{\tup}[1]{\left<{#1}\right>}
\newcommand{\np}{$\mathcal{NP}$}
\newcommand{\p}{$\mathcal{P}$}

%%%%%%%%%%%%%%%%%%%%%%%%%%%%%%%%%%%%%%%%%%%%%%%%%%%%%%%%%%%
%%%%%%%%%%%%%%%%%%%%%%%%%%%%%%%%%%%%%%%%%%%%%%%%%%%%%%%%%%%
%%%%%%%%%%%%%%%%%%%%%%%%%%%%%%%%%%%%%%%%%%%%%%%%%%%%%%%%%%%

%\title{\Large\bf  \vspace{0pt} Optimal Coalition Structures in Graph Games}
%\author{Vladimir Kolmogorov \\ \normalsize Institute of Science and Technology Austria \\ {\normalsize\tt vnk@ist.ac.at}
%\and Vladimir Kolmogorov \\ \normalsize Institute of Science and Technology Austria \\ {\normalsize\tt vnk@ist.ac.at}
%}

\title{Optimal Coalition Structures in Cooperative Graph Games\footnote{A short version of this paper is to appear at AAAI 2013}}
\author{Yoram Bachrach$^{\dag}$, Pushmeet Kohli$^{\dag}$, Vladimir Kolmogorov$^{\ddag}$ and Morteza Zadimoghaddam$^{\S}$ \\
\\
\normalsize
$\dag$ Microsoft Research, Cambridge, UK \\
$\ddag$ Institute of Science and Technology (IST), Austria \\
$\S$ MIT, Cambridge, MA, USA}

\date{}
\maketitle

%\vspace{-6pt}
\begin{abstract}
Representation  languages for coalitional games are a key research area in algorithmic game theory. There is an inherent tradeoff between how general a language is, allowing it to capture more elaborate games, and how hard it is computationally to optimize and solve such games. One prominent such language is the simple yet expressive \emph{Weighted Graph Games (WGGs)} representation~\cite{deng_papa_1994}, which maintains knowledge about synergies between agents in the form of an edge weighted graph.

We consider the problem of finding the optimal coalition structure in WGGs. The agents in such games are vertices in a graph, and the value of a coalition is the sum of the weights of the edges present between coalition members. The \emph{optimal coalition structure} is a partition of the agents to coalitions, that maximizes the sum of utilities obtained by the coalitions. We show that finding the optimal coalition structure is not only hard for general graphs, but is also intractable for restricted families such as planar graphs which are amenable for many other combinatorial problems. We then provide algorithms with constant factor approximations for planar, minor-free and bounded degree graphs.
\end{abstract}

\section{Introduction}
\label{l_introduction}

Consider a set of agents who can work in teams. Some agents work well together, while others find it hard to do so. When two agents work well together, a team which contains both of them can achieve better results due to the synergy between them. However, when agents find it hard to work together, a team that contains both agents has a reduced utility due to their inability to cooperate, and may perform better when one of them is removed. How should we best partition agents into teams to maximize the total utility generated?

%How can such synergies be modeled? How should knowledge about such synergies be represented to allow one to tractably reason about forming agent teams? What teams should be formed?

Cooperation is a central issue in 
%artificial intelligence and multi-agent systems. 
algorithmic game theory, 
%. It is possible to analyze 
and \emph{cooperative games} are very useful for modeling team formation and negotiation in many domains. 
% negotiations, contracts and strategies in such domains using \emph{cooperative (or coalitional) games}. 
In such games, agents form coalitions to pursue a joint cause, and must decide how to split the gains they derive as a group. Much of previous literature explores settings where only one coalition can be formed. However, in many scenarios agents can form \emph{multiple} disjoint coalitions, where each coalition can obtain its profits independently. An important goal is to partition the agents in a way that maximizes the total value gained by all coalitions, i.e. the social welfare. This problem is known as \emph{finding the optimal coalitional structure}.

Cooperative game theory provides tools to reason about how to best partition the agents into team or how the utility generated by an agent team should be allocated to its members. The key piece of information used by such game theortic tools is the amount of utility any team of agents could potentially generate. Since an agent team, sometimes called a \emph{coalition} is simply a subset of agents, the number of different coalitions is exponential in the number of agents. The mapping between any agent coalition and the utility it can generate lies at the heart of any cooperative game, and is called the \emph{characteristic function} of the game. 

One possible way to represent the characteristic function is by simply listing down this utility for any possible coalition. Generally any coalition may have a different value, so a na\"ive representation requires storage exponential in the number of agents. This emphasizes the need for a \emph{succinct representation}. 

In many domains it is possible to use \emph{knowledge} about \emph{specific features} of the domains and provide a more succint representation of the characteristic function. However, even for such succint representations, reasoning about the game may still be computationally hard. Previous work in algorithmic game theory has examined many such representation languages for cooperative games and the computational complexity of calculating various solutions in them. 
%using these languages has also been well-investigated. 
For general surveys of such representations, see~\cite{coopgt-book,shoham2009multiagent,bilbao:2000:cooperative}. Some approaches guarantee a polynomial description, but only represent restricted games
%. Examples of such languages include
~\cite{peleg2007introduction,deng_papa_1994}. Others can represent any game, but require exponential storage in the worst case~\cite{IeoSho06,bachrach2010coalitional}. We examine coalition structures in the prominent \emph{Weighted Graph Games (WGG)} model of~\cite{deng_papa_1994}. In WGGs, agent synergies are expressed using a graph, where agents are vertices and the weight of an edge connecting two agents expresses how well they work together. A positive weight indicates they can coordinate well, yielding a positive contribution to a coalition containing both. The edge's weight expresses how much utility can be derived from the cooperation of the two agents. A negative weight indicates the agents do not work well together, diminishing the utility of a team containing both. Again, the edge's weight expresses the reduction in utility of a coalition that contains both agents. This graph representation allows expressing synergies in coalitions that agents form.\footnote{Group buying sites (e.g. LivingSocial and Groupon) reward social recommendations, so WGGs can capture synergies from including enough friends of a consumer to make her buy a good.}

\emph{Our contribution}: We study optimal coalition structures in WGGs. We prove that finding the optimal coalition structure in WGGs is hard even for restricted families of graphs such as planar graphs. We provide constant factor approximation algorithms for planar, minor-free and bounded degree graphs.

Note that the objective function that we study coincides with the {\em Correlation Clustering} functional~\cite{bansal2004cc} {\bf up to an additive constant}.
Due to this additive shift existing approximability results for Correlation Clustering such as those in~\cite{tan2008,mitra2009}
do not translate to our problem. We believe that the additive normalization that we use is quite natural in our context.
To our knowledge, we are the first to study approximation schemes for this functional. 
%Also, we show NP-hardness of the Correlation Clustering problem for planar graphs, which is a new result.

%\subsection{Preliminaries}
%\label{l_preliminaries}
{\bf Preliminaries:}
A transferable utility (TU) coalitional game is composed of a set of $n$ agents, $I$, and a characteristic function mapping any subset (coalition) of the agents to a rational value $v : 2^I \rightarrow \mathbb{Q}$, indicating the total utility these agents achieve together. We follow the \emph{Coalition Structure (CS)} model of~\cite{aumann1974cooperative}, where agents can form several teams \emph{simultaneously}. A coalition structure is a partition of the agents into disjoint coalitions. Formally, $\CS=(C^1, \dots, C^l)$ is a \emph{coalition structure} over $I$ if $\cup_{i=1}^l C^i=I$ and $C^i\cap C^j=\emptyset$ for all $i\neq j$; The set $\calCS(I)$ denotes all possible coalition structures on $I$.

We overload notation and denote $v(\CS) = \sum_{C^j\in \CS}v(C^j)$. Our focus in this paper is on the coalitional structure generation problem, of finding the \emph{optimal coalition structure} $CS^*$, with maximal value. %In other words,
Given a game $\tup{I,v}$, an \emph{optimal} coalition structure $CS^* \in \calCS(I)$ is a partition that maximizes welfare, i.e. for any other $CS \in \calCS(I)$ we have $v(CS) \leq v(CS^*)$. We denote the problem of finding an optimal coalition structure as \emph{OPT-CS}. OPT-CS is equivalent to a complete set partitioning problem where all disjoint subsets of the set of all agents are possible. The set partitioning problem was studied in~\cite{YunYeh1986} and shown to be NP-hard. However, we focus on OPT-CS where inputs are restricted to WGGs~\cite{deng_papa_1994}.

%\emph{Graph Games}: 
We now review the \emph{Weighted Graph Games (WGGs)} model~\cite{deng_papa_1994}.
%WGGs are a restricted representation of coalitional games.
\begin{definition}
\label{l_def_graph_games}
WGGs are games played over a graph $G=\tup{V,E}$, with edge weights $w : E \rightarrow \mathbb{Q}$. The agents are the vertices, so $I=V$, and the characteristic function is the sum of the weights on the graph induced by the coalition. Given a coalition $C \subseteq V$, we denote the edges induced by the coalition as $E_C = \{ e = (u,v) \in E | u,v \in C \}$. The characteristic function is 
$v(C) = \sum_{e \in E_C} w(e)$.
\end{definition}

\noindent As noted in~\cite{deng_papa_1994} WGG cannot represent all games. We allow at most one edge between any two vertices (parallel edges may be merged, summing the weights).

\section{Finding the Optimal Coalitional Structure}
\label{l_main_res}
We first formally define the OPT-CS problem. 

\begin{definition} [OPT-CS-WGG]
  \label{l_opt_cs}
Given a WGG $\tup{I,v}$ and a rational number $r$, test if the value of the optimal coalitional structure for this game is at least $r$, i.e. if there is $CS\in\mathcal{CS}(I)$ such that $v(CS)\geq r$. We denote an optimal structure as $CS^* \in \arg \max_{CS \in \calCS(I)} v(CS)$.
\end{definition}

\subsection{Hardness Results}
\label{l_hardness_results}

We first discuss why OPT-CS-WGG is hard. As our contribution lies in providing tractable algorithms for restricted cases, we only provide sketches for hardness results.
%, or refer to important related work. 
%
It is quite easy to show hardness for general graphs using a reduction from Independent Set 
%~\cite{garey:np} 
(IS). IS is the problem of testing if there is a subset of vertices of size $k$ with no edges between them in an input graph $G=\tup{V,E}$. Theorem~\ref{theorem_planar_hardness} is a stronger result, so we only provide a sketch of the proof (for completeness a detailed proof is given in the appendix).

\begin{theorem}OPT-CS-WGG is \np-complete, and assuming $\mathrm{P} \neq \mathrm{NP}$, there is no polynomial time $O(n^{1/2-\epsilon})$-approximation algorithm for it where $n$ is the number of vertices and $\epsilon$ is any positive constant. There is no polynomial time $O(n^{1-\epsilon})$-approximation algorithm for this problem unless $\mathrm{NP}=\mathrm{ZPP}$.
\label{l_thm_opt_cs_hard_stsg}
\end{theorem}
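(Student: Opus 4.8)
The plan is to reduce from Independent Set, exploiting the fact that in a WGG we can make "conflicts" between vertices arbitrarily costly and "rewards" for including a vertex in a nontrivial coalition worth exactly one unit. Given an IS instance $G=\tup{V,E}$, I would build a WGG on a graph $G'$ as follows: keep all vertices of $V$, and for each original edge $(u,v)\in E$ assign a large negative weight $-M$ (with $M$ chosen polynomially large, say $M = n+1$ or larger). In addition, attach to each vertex $u\in V$ a private "gadget" — the simplest choice is a single pendant vertex $u'$ joined to $u$ by an edge of weight $+1$ — so that a coalition structure can earn one unit of value for each original vertex it "activates" by grouping $u$ together with its private partner $u'$, but only if $u$ is not grouped with any neighbour (else the $-M$ term dominates and it is strictly better to split them off). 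The point is that, because the gadget edges are vertex-disjoint and the $-M$ edges are ruinous, an optimal coalition structure will never place two $G$-adjacent original vertices in the same coalition; hence the set $S$ of original vertices that sit in a coalition together with their private partner forms an independent set, and $v(CS^*)$ equals the maximum independent set size. Conversely any independent set $S$ yields a coalition structure of value $|S|$ (put $\{u,u'\}$ together for $u\in S$, singletons otherwise, isolating every $-M$ edge). So $v(CS^*) = \alpha(G)$, the independence number, which gives NP-completeness of the decision version.

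For the inapproximability part I would push the same reduction through the known hardness of approximating Independent Set: it is NP-hard to approximate $\alpha(G)$ within $n^{1-\epsilon}$ unless $\mathrm{NP}=\mathrm{ZPP}$, and within $n^{1/2-\epsilon}$ — in fact within any $n^{1-\epsilon}$ — under $\mathrm{P}\neq\mathrm{NP}$ after Håstad / Zuckerman; the weaker $n^{1/2-\epsilon}$ bound is the one safely attributable to $\mathrm{P}\neq\mathrm{NP}$ via the classical amplification of Clique hardness. Since the construction is approximation-preserving up to a constant blow-up in the number of vertices — $G'$ has $2|V|$ vertices, so $n' = \Theta(n)$ and an exponent $1-\epsilon$ in $n$ is still $1-\epsilon'$ in $n'$ for a slightly adjusted $\epsilon'$ — a polynomial-time $\rho$-approximation for OPT-CS-WGG would yield a polynomial-time $\rho$-approximation for maximum independent set on $G$ (read off $S$ from the returned coalition structure by discarding any coalition whose induced value is nonpositive, which can only increase the ratio of recovered value to $\alpha(G)$). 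Plugging in $\rho = O(n^{1/2-\epsilon})$ contradicts $\mathrm{P}\neq\mathrm{NP}$ and $\rho = O(n^{1-\epsilon})$ contradicts $\mathrm{NP}=\mathrm{ZPP}$.

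The two routine verifications are: (i) that it is never beneficial to merge $u$ with a neighbour — formally, given any $CS$ containing a coalition $C$ with a $-M$ edge inside it, splitting $C$ along that edge (or simply isolating one endpoint) strictly increases welfare once $M$ exceeds the total of all positive weights, which is at most $n$; and (ii) that the optimal structure's positive contribution is exactly a sum of gadget edges, so it literally counts an independent set. Neither is hard, but care is needed to state the "only if" direction cleanly: from an optimal $CS$ one extracts $S = \{\, u \in V : u \text{ and } u' \text{ lie in the same coalition of } CS\,\}$ and must argue $S$ is independent and $v(CS) = |S|$, which follows from (i) and (ii).

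The main obstacle is not the reduction itself — it is essentially folklore — but pinning down the precise complexity assumption that each approximation bound requires, since the literature states Independent Set inapproximability under several different hypotheses ($\mathrm{P}\neq\mathrm{NP}$, $\mathrm{NP}=\mathrm{ZPP}$, $\mathrm{NP}\not\subseteq \mathrm{BPTIME}$, etc.) with different exponents. I would match the $n^{1-\epsilon}$-vs-$\mathrm{ZPP}$ bound to the Håstad/Zuckerman line and keep the $n^{1/2-\epsilon}$-vs-$\mathrm{P}\neq\mathrm{NP}$ bound as the conservative deterministic statement, taking care that the constant-factor change in vertex count is absorbed into $\epsilon$. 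Everything else is a short, direct argument, which is why the paper only sketches it.
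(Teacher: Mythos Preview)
Your reduction has a genuine gap: with pendant gadgets the positive edges $(u,u')$ form a \emph{matching}, and any matching can be achieved in full by a coalition structure that avoids all negative edges. Concretely, the structure $\{\{u,u'\}:u\in V\}$ places no two $G$-adjacent original vertices in the same coalition, incurs no $-M$ term, and collects every single $+1$ gadget edge. So $v(CS^*)=|V|$ for \emph{every} input graph $G$, and the value of the WGG instance carries no information about $\alpha(G)$. The sentence ``hence the set $S$ of original vertices that sit in a coalition together with their private partner forms an independent set'' is where the argument breaks: that set is all of $V$, and nothing forces two activated vertices into a common coalition.

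The paper's construction avoids this by using a \emph{single} extra hub vertex $s$, joined to every $v\in V$ with weight $+1$, and setting all original edges to a large negative weight. Now the positive edges form a star rather than a matching: to collect both $(s,u)$ and $(s,v)$ you must put $s,u,v$ in one coalition, which is ruinous when $(u,v)\in E$. Hence the maximum number of $+1$ edges one can collect is exactly $\alpha(G)$, the reduction is parsimonious, and H{\aa}stad's inapproximability bounds for Independent Set transfer directly (with $n'=|V|+1=\Theta(n)$). Your high-level plan---reduce from IS, make original edges very negative, make ``activation'' worth $+1$, then invoke H{\aa}stad---is the right one; you just need the positive edges to share a vertex so that activation choices interact.
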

\begin{proof}
%(sketch - see appendix for full proof). 
%We sketch the proof below. The detailed proof can be found in the appendix.

We reduce an IS instance $\tup{G=\tup{V,E},k}$ to an OPT-CS-WGG instance. The reduced graph $G'=\tup{V',E'}$ has all vertices and edges of $G$ (so $V \subseteq V'$ and $E \subseteq E'$), and an additional vertex $s$. The weights of the original edges are all set to be $-k$ (where $k > |V|$). Vertex $s$ is connected to each vertex in $v \in V$ with an edge $e$ of weight $w(e) = 1$. $G$ has an independent set of size $k$ iff there exists a partition for $G'$ with value $k$. This is a parsimonious reduction between IS and OPT-CS-WGG. H{\aa}stad proved that there is no polynomial time $O(n^{1/2-\epsilon})$-approximation algorithm for IS assuming $\mathrm{P} \neq \mathrm{NP}$, and no polynomial time $O(n^{1-\epsilon})$-approximation algorithm for IS unless $\mathrm{NP}=\mathrm{ZPP}$ \cite{Hastad96}.
% So the same hardness results work for the OPT-CG-WGG problem.
\end{proof}

The above result shows that OPT-CS-WGG has no good approximation for general graphs. It might seem that the hardness comes from having to avoid all negative edges, as we can make the weights of the negative edges very low to make sure that they are not present in the optimal solution. However, we show OPT-CS-WGG is hard and inapproximable even when all weights have the same absolute value, using an involved reduction from IS to the problem of OPT-CS-WGG where all edges are either $+1$ or $-1$, denoted OPT-CS-WGG$\pm 1$. 
%by constructing a reduction from independent set to OPT-CS-WGG where all edge weights are either $+1$ or $-1$. We obtain the same hardness result using a slightly more involved proof. We define OPT-CS-WGG$\pm 1$ to be the problem OPT-CS-WGG with edge weights equal to $\pm 1$. 
\begin{theorem}OPT-CS-WGG$\pm 1$ is \np-complete. Assuming $\mathrm{P} \neq \mathrm{NP}$, there is no polynomial time $O(n^{1/2-\epsilon})$-approximation algorithm for it where $n$ is the number of vertices of the graph, and $\epsilon$ is any positive constant. There is no polynomial time $O(n^{1-\epsilon})$-approximation algorithm for this problem unless $\mathrm{NP}=\mathrm{ZPP}$. %See proof in Appendix.
\label{l_thm_opt_cs_hard_stsg_plus_minus_one}
\end{theorem}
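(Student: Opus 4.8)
\medskip\noindent\textbf{Proof proposal.} The plan is to give a direct, approximation-preserving reduction from Independent Set that uses only $\pm 1$ weights. Given an IS instance $\tup{G=\tup{V,E},k}$ with $|V|=n$, I build a WGG on the graph $G'=\tup{V',E'}$ obtained from $G$ by (i) keeping every vertex and edge of $G$ but resetting the weight of each original edge to $-1$, and (ii) adding one new vertex $s$ joined to every $v\in V$ by an edge of weight $+1$. Then $|V'|=n+1$, all weights are $\pm 1$, and $G'$ is simple, so this is a legal OPT-CS-WGG$\pm1$ instance; the goal is to show $v(CS^*)=\alpha(G)$, the independence number of $G$.

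The first step is to note that $s$ is the unique endpoint of any positive edge, so any coalition $C$ with $s\notin C$ has $v(C)=-|E_C|\le 0$; hence in an optimal coalition structure we may take every coalition avoiding $s$ to be a singleton, and an optimal structure is then determined by the set $S\subseteq V$ pooled with $s$, with value $|S|-e_G(S)$, where $e_G(S)$ is the number of edges of $G$ inside $S$. The second step is the combinatorial identity $\max_{S\subseteq V}\bigl(|S|-e_G(S)\bigr)=\alpha(G)$: ``$\ge$'' is witnessed by a maximum independent set, and for ``$\le$'' I would argue greedily --- starting from any $S$, while $G[S]$ has an edge, delete one of its endpoints; each deletion drops $|S|$ by $1$ and $e_G(S)$ by at least $1$, so $|S|-e_G(S)$ never decreases, and we terminate at an independent set $S'$ with $|S'|\ge|S|-e_G(S)$, whence $|S|-e_G(S)\le\alpha(G)$. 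Together these give $v(CS^*)=\alpha(G)$, so the reduction runs in polynomial time with $v(CS^*)\ge k$ iff $G$ has an independent set of size $k$; combined with the routine fact that a coalition structure has at most $|V'|$ parts and is evaluable in polynomial time (so the problem lies in \np), this yields \np-completeness.

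For the inapproximability claims I would use that the greedy argument is constructive. From any coalition structure $CS$ for $G'$, letting $S$ be the set of original vertices pooled with $s$, the coalitions avoiding $s$ contribute at most $0$, so $v(CS)\le|S|-e_G(S)$, and greedy deletion turns $S$, in polynomial time, into an independent set of $G$ of size at least $|S|-e_G(S)\ge v(CS)$. Thus a polynomial-time $\rho(|V'|)$-approximation for OPT-CS-WGG$\pm1$ yields a polynomial-time $\rho(n+1)$-approximation for Independent Set on $n$ vertices; since $|V'|=n+1\le 2n$ and $\rho$ may be assumed non-decreasing, an $O(|V'|^{1/2-\epsilon})$ (resp.\ $O(|V'|^{1-\epsilon})$) approximation would give an $O(n^{1/2-\epsilon})$ (resp.\ $O(n^{1-\epsilon})$) approximation for Independent Set, contradicting H{\aa}stad's theorem~\cite{Hastad96} under $\mathrm{P}\ne\mathrm{NP}$ (resp.\ $\mathrm{NP}\ne\mathrm{ZPP}$).

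The only ingredient that is more than bookkeeping is the identity $\max_S(|S|-e_G(S))=\alpha(G)$, and in particular the fact that its proof is algorithmic --- this is exactly what prevents any loss in the exponent of the approximation ratio when transferring hardness; I would also double-check that negative-weight coalitions can never help, which here is immediate because $s$ is the sole source of positive weight. If one additionally insisted that \emph{every} optimal structure be forced to contain some $-1$ edge, the single hub $s$ would have to be replaced by a more elaborate gadget, but that strengthening is not needed for the statement as given.
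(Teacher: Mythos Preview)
Your reduction is exactly the one the paper uses (add a hub $s$ with $+1$ edges to every original vertex, set every original edge to $-1$), and your proof is correct. Where you differ is in the analysis of the correspondence between $v(CS^*)$ and $\alpha(G)$, and your argument is in fact tighter than the paper's.

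The paper does \emph{not} prove $v(CS^*)=\alpha(G)$. It only shows that from a coalition structure of value $k'$ one can extract an independent set of size at least $k'/9$: letting $S$ be the set pooled with $s$, one has $|S|\ge k'$ and $e_G(S)\le|S|$, so $G[S]$ has average degree at most $2$; at least $|S|/3$ vertices have degree $\le 2$, and a greedy pick among those yields an independent set of size $\ge|S|/9\ge k'/9$. The constant $9$ is then absorbed into the $O(n^{1/2-\epsilon})$ and $O(n^{1-\epsilon})$ bounds.

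Your key lemma $\max_{S\subseteq V}\bigl(|S|-e_G(S)\bigr)=\alpha(G)$, proved by the one-line greedy deletion (remove an endpoint of any remaining edge; $|S|$ drops by $1$, $e_G(S)$ by at least $1$, so the difference never decreases), eliminates this slack entirely and gives an exact, value-preserving, polynomial-time reduction in both directions. This is more elementary than the paper's two-stage degree argument, yields the \np-completeness without any rounding, and makes the approximation transfer completely clean (no asymptotic absorption of a constant needed). Both routes suffice for the theorem as stated, but yours is the sharper one.
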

%%%% Put back proof in main text
\begin{proof}
Similarly to the proof of Theorem~\ref{l_thm_opt_cs_hard_stsg}, we reduce an Independent Set instance $G=\tup{V,E}$ to a OPT-CS-WGG instance. The reduced graph $G'=\tup{V',E'}$ contains all the vertices and edges of $G$ (so $V \subseteq V'$ and $E \subseteq E'$), and one additional vertex $s$. The weights of the original edges are identical, and are all $w(e) = -1$ (this is where the reduction differs from that of Theorem~\ref{l_thm_opt_cs_hard_stsg}). Vertex $s$ is connected to any vertex in $v \in V$ with an edge $e$ with weight $w(e) = +1$.
%We show the original graph $G$ has an independent set of size $k$ iff there exists a partition for graph $G'$ with value $k$. If $G$ has an independent set $S$ with size $k$, we can create a coalition structure $CS$ with value $k$.

If the optimal coalition structure has value $k'$ in graph $G'$, we show we can find an independent set of size at least $k'/9$ in $G$.
As in the previous reduction, all positive edges are incident to vertex $s$. Thus every vertex is either in the coalition of vertex $s$ or in a separate single-vertex coalition. Let $S$ be the set of vertices from graph $G$ in the coalition of vertex $s$. The value of coalition $S$ is thus the value of the entire coalition structure. Suppose there are $a$ vertices in $S$, and $b$ negative edges between vertices of $S$. In this case, the sum of all weights of positive edges for $S$ is equal to $a$, and the sum of all weights of negative edges in $S$ is $-b$. Therefore, $k'=a-b$, so $a > k'$. Also note that $b < a$.

Now consider the subgraph $G[S]$. The number of edges in this graph, $b$, is not more than the number of vertices $a$. Therefore, the average degree is at most $2$ in this subgraph. Thus, the number of vertices with degree at least $3$ in this subgraph is not more than $2a/3$, so there are at least $a/3$ vertices with degree at most $2$. 

Let $S'$ be the set of vertices with degree at most $2$ in the subgraph $G[S]$. Now consider the subgraph $G[S']$. This subgraph has at least $a/3$ vertices, each with a degree at most $2$. We can pick $1/3$ of the vertices of this subgraph as an independent set: we just pick a vertex arbitrarily, put it in our independent set and remove its neighbors, which are no more than two vertices. Thus, in the worst case, for every three vertices, we choose one for our independent set. This way, we can find an independent set of size at least $a/9 \geq k'/9$, so our reduction loses a factor of at most $9$, but the same hardness results hold asymptotically.
\end{proof}

%%%% Put back proof to main text

%\begin{proof}
%Proof in Appendix.
%\end{proof}
%So the problem is still hard even if we restrict the weights. But what happens if we restrict ourselves to graphs with special structures instead of restricting weights?
OPT-CS-WGG remains hard even for planar graphs. Independently of us ~\cite{DBLP:journals/corr/abs-1102-1747} examine a related coalition structure generation problem for a graph representation language. Their representation is combinatorially richer, so hardness results do not generally carry over to the simpler WGG representation. However, their proof that generating the optimal coalition structure where their graph representation is planar does carry over to the simplified WGG setting. We also provide an alternative proof in the appendix. This yields Theorem~\ref{theorem_planar_hardness}. 

\begin{theorem}
OPT-CS-WGG in planar graphs is \np-complete.
\label{theorem_planar_hardness}
\end{theorem}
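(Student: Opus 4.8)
The plan: Membership in \np\ is immediate, since a coalition structure $CS$ is a polynomial-size witness and $v(CS)=\sum_{e\text{ not cut by }CS}w(e)$ is computable in polynomial time, so the content of the theorem is \np-hardness. The first thing to note is that the reductions behind Theorems~\ref{l_thm_opt_cs_hard_stsg} and~\ref{l_thm_opt_cs_hard_stsg_plus_minus_one} do not transfer: each adds a ``hub'' vertex $s$ joined to \emph{every} vertex of the input graph, and a graph together with a universal vertex is planar only when that graph is outerplanar — a case where Independent Set is already in \p. Hence the collector role of $s$ (forcing all rewarded vertices into a common coalition so that the negative edges among them become active) must be simulated by a \emph{planar} auxiliary network.

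My preferred route is a reduction from an \np-complete problem whose instances are planar, e.g. Planar 3-SAT (the variable--clause incidence graph is planar), using the standard wire-and-gadget methodology. For a planar 3-CNF $\varphi$ I would build a WGG $G_\varphi$ whose underlying graph is obtained by locally replacing each vertex and edge of a fixed planar drawing of $\varphi$'s incidence graph by a constant-size piece: a \textbf{variable gadget} with exactly two optimal internal coalition configurations (``true''/``false''); a \textbf{wire gadget}, a path of $\pm1$ ``signal'' edges interleaved with heavily weighted ``rigidity'' edges, that forces its two endpoints into matching configurations and can be drawn inside a face of the embedding; a \textbf{negation/crossover} piece as needed; and a \textbf{clause gadget} that gains one extra unit of value precisely when at least one of its three incident wires is in the satisfying state. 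Every gadget value is by construction a sum of edge weights, so $G_\varphi$ is a genuine WGG, and because all pieces are local, planarity of $\varphi$'s incidence graph is inherited by $G_\varphi$. Setting the rigidity edges to a large integer weight $M$ (a fixed polynomial in the instance size, exceeding the total absolute weight of all signal edges) and the signal edges to $\pm1$, a routine exchange argument shows that in an optimal $CS$ every gadget sits in one of its intended configurations, and that $v(CS^\ast)=B+(\text{max number of clauses of }\varphi\text{ satisfiable simultaneously})$ for an explicitly computable baseline $B$; thus $\varphi$ is satisfiable iff $v(CS^\ast)\ge B+m$.

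The main obstacle I anticipate is the interaction between \emph{rigidity} and \emph{planarity}. On the one hand, the heavy weights $M$ that pin down each gadget's internal shape must not inadvertently glue together vertices of different gadgets — exactly the failure mode of the naive single-hub construction — so the magnitudes and placement of the heavy edges must be chosen carefully and the ``the optimum collapses to the intended form'' lemma proved with the weights as an explicit function of $|V(G_\varphi)|$. On the other hand, every gadget and wire must be realizable without edge crossings inside the faces of the fixed planar embedding, which is precisely where the planarity hypothesis is spent, and the value bookkeeping must be checked to be independent of how wires are routed. As an alternative to carrying this out from scratch, one may instead simply observe — as noted above — that the planar hardness reduction for coalition-structure generation over graphs in~\cite{DBLP:journals/corr/abs-1102-1747} only ever uses characteristic functions of the sum-of-edge-weights form, so it specializes verbatim to planar OPT-CS-WGG.
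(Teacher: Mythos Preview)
Your closing fallback --- that the planar reduction of~\cite{DBLP:journals/corr/abs-1102-1747} uses only sum-of-edge-weight characteristic functions and hence already establishes the theorem --- is exactly the argument the paper gives in the main text, so on that ground alone the proposal suffices.

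Your self-contained route, however, diverges from the paper's own appendix proof. You sketch a reduction from Planar 3-SAT with variable, wire, and clause gadgets; the paper instead reduces from \emph{Planar Independent Set} (NP-hard by~\cite{Garey76}). Each vertex $i$ of degree $d$ is replaced by a $(2d{+}1)$-node polygon gadget $G_i$ admitting exactly two near-optimal internal partitionings, encoding ``$i\in S$'' versus ``$i\notin S$'' and differing in value by precisely $1$; each edge $(i,j)$ is replaced by a constant-size gadget $G_e$ assembled from three copies of an explicit six-node graph with weights in $\{\pm C,\pm 2C,\pm 8C\}$, engineered so that every partitioning feasible for $G_e$ forbids $i$ and $j$ from simultaneously being ``in''. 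Starting from Independent Set rather than 3-SAT is a cleaner fit here: the only inter-gadget interaction needed is a pairwise ``not both'' constraint, so there are no wires, no negation, and no disjunction gadgets, and correctness reduces to a finite case analysis on two small fixed graphs. Your 3-SAT outline is standard in shape and should in principle go through, but as written it is a blueprint rather than a proof --- no gadget is actually exhibited, and the ``heavy $M$ pins every gadget to an intended configuration'' exchange argument is asserted rather than carried out --- whereas the paper's appendix supplies the gadgets and the case analysis in full.
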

\begin{proof}
Given in ~\cite{DBLP:journals/corr/abs-1102-1747}; An alternative proof is given in the appendix.  
\end{proof}

\subsection{Exact Algorithms for Bounded Treewidth Graphs}
\label{l_positive_results}
For completeness, we first consider OPT-CS-WGG restricted to graphs of bounded treewidth.
Lemma~\ref{l_thm_opt_cs_tree_p} is a special case of Lemma~\ref{l_thm_opt_cs_bounded_treewidth_p} from~\cite{Cowans05}.\footnote{We decided to include Lemma~\ref{l_thm_opt_cs_tree_p} since (i) its proof is much simpler than that of
Lemma~\ref{l_thm_opt_cs_bounded_treewidth_p}; 
(ii) it provides a distributed (local) algorithm for partitioning. Two neighbors are in the same coalition if and only if their connecting edge has positive weight;
(iii) it shows that for forests, there is a simple solution that achieves all positive edges, and avoids all negative edges.}

\begin{lemma}OPT-CS-WGG with inputs restricted to trees (or forests) is in \p. 
% See proof in appendix
\label{l_thm_opt_cs_tree_p}
\end{lemma}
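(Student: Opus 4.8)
The plan is to show that for a forest, the optimal coalition structure is obtained by the purely local rule: place two adjacent vertices in the same coalition if and only if the edge between them has positive weight. Equivalently, delete every edge of nonpositive weight and take the connected components of the resulting forest as the coalitions. This candidate structure $\CS_0$ has value $\sum_{e \in E : w(e) > 0} w(e)$, which is clearly an upper bound on $v(\CS)$ for \emph{any} coalition structure $\CS$, since $v(\CS) = \sum_{C \in \CS} \sum_{e \in E_C} w(e) \le \sum_{e \in E,\, w(e) > 0} w(e)$ (each edge contributes at most $\max(w(e),0)$, and only if both endpoints lie in a common coalition). So it suffices to exhibit \emph{one} coalition structure attaining this bound, and the claim is that $\CS_0$ does. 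The only thing to check is that in each coalition $C$ of $\CS_0$, \emph{every} edge of $E_C$ is a positive edge; then $v(C) = \sum_{e \in E_C} w(e)$ picks up exactly those positive edges and no negative ones.

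First I would make precise the construction: let $G = \tup{V,E}$ be a forest, let $E^+ = \{e \in E : w(e) > 0\}$, and let $\CS_0$ be the partition of $V$ into the vertex sets of the connected components of the graph $\tup{V, E^+}$. This is computable in linear time (a single traversal), establishing membership in \p\ once correctness is shown. Second I would verify the key structural fact: for a forest, if $C$ is the vertex set of a connected component of $\tup{V,E^+}$, then $E_C = E^+ \cap E_C$, i.e. there is no edge of $E \setminus E^+$ with both endpoints in $C$. The argument is that $C$ is connected via edges of $E^+$, so any edge $e = (u,v) \in E$ with $u,v \in C$ together with the unique $E^+$-path from $u$ to $v$ inside $C$ would form a cycle in $G$ unless $e$ itself is on that path; since $G$ is acyclic, $e$ lies on the $E^+$-path and hence $e \in E^+$. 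Third, combining: $v(\CS_0) = \sum_{C \in \CS_0} \sum_{e \in E_C} w(e) = \sum_{C \in \CS_0} \sum_{e \in E_C} w(e)$ where every summand is positive, and summing over all components recovers every positive edge exactly once (positive edges never cross components, by definition of $\CS_0$), so $v(\CS_0) = \sum_{e \in E^+} w(e)$, matching the upper bound. Hence $\CS_0$ is optimal, and OPT-CS-WGG on forests is decided in polynomial (indeed linear) time.

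The main obstacle, such as it is, is the acyclicity argument in the second step — one must be careful that the forest hypothesis is used exactly there and nowhere else. (In a general graph the rule fails: a negative edge can be forced to lie inside a component because a positive cycle already connects its endpoints, and that is precisely why the problem becomes hard beyond bounded treewidth.) Everything else — the upper bound and the linear-time bookkeeping — is routine. For forests with several trees the same proof applies verbatim since components of $\tup{V,E^+}$ never straddle distinct trees of $G$.
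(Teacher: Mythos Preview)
Your proposal is correct and follows essentially the same approach as the paper: both argue that $v(\CS)\le w(E^+)$ trivially and then exhibit a partition attaining this bound by keeping every positive edge inside a coalition and every nonpositive edge between coalitions, using acyclicity of the forest to guarantee consistency. The only cosmetic difference is the witness partition: the paper builds a \emph{two}-coloring $X,Y$ by traversing from a leaf and flipping sides across negative edges, whereas you take the connected components of $\tup{V,E^+}$; both are linear-time, satisfy the same local rule ``neighbors share a coalition iff the edge is positive,'' and yield the same value $w(E^+)$.
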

\begin{proof}
Let $P \subseteq E$ be the set of edges with positive weights, and $N \subseteq E$ be the edges with negative weights. For an edge subset $A$ we denote its total weight as $w(A) = \sum_{e \in A} w(e)$. Note that for any structure $CS$ we have $v(CS) \leq w(P)$. We show that for trees the optimal structure $CS^*$ has $v(CS^*) = w(P)$. A very simple partitioning to two sub coalitions $X,Y$ suffices for this. We begin with an arbitrary leaf $v$, and put it in one of the sub coalitions, so $v \in A$. For any neighbor $u$ of $v$, so $e=(v,u) \in E$, we choose a sub coalition for $u$ based on $w(e)$. If $w(e) > 0$ we put $u$ in the same sub coalition as $v$, and if $w(e) < 0$ we put $u$ in the other sub coalition. We then continue the process from the vertex $u$, until we deplete all the vertices in the graph. Since the graph is a tree, any edge with negative weight has one vertex in $X$ and the other in $Y$, and is not counted for the value of the coalition structure, while every positive weight edge has both its vertices in the same sub coalition, and is counted for the value of the coalition structure. Thus we have $v(CS^*) = w(P)$. The partitioning can be done using a simple breadth first search, in polynomial time of $O(|V|+|E|)$.
\end{proof}

\begin{lemma}[\cite{Cowans05}]For $k$-bounded treewidth graphs (constant $k$), OPT-CS-WGG is in \p.
\label{l_thm_opt_cs_bounded_treewidth_p}
\end{lemma}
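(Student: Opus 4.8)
The plan is to turn OPT-CS-WGG on a width-$k$ graph into a dynamic program over a tree decomposition, in the same spirit as the forest case of Lemma~\ref{l_thm_opt_cs_tree_p}, but now tracking, for each bag, \emph{which} of its vertices are grouped together into a common coalition. First I would compute (in polynomial time, indeed linear for fixed $k$ by Bodlaender's algorithm) a tree decomposition of $G$ of width at most $k$, and then convert it into a \emph{nice} tree decomposition with $O(kn)$ nodes, each being a leaf with empty bag, an introduce node, a forget node, or a join node, with the root having an empty bag. For a node $t$ let $B_t$ denote its bag and $V_t$ the set of vertices occurring in the subtree rooted at $t$.

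The DP state is indexed by partitions of bags: for a node $t$ and a partition $\pi$ of $B_t$, let $c_t[\pi]$ be the maximum over all partitions $\mathcal{P}$ of $V_t$ whose restriction to $B_t$ equals $\pi$ of $\sum w(u,v)$ over edges $(u,v)\in E$ that have at least one endpoint in $V_t\setminus B_t$ and have $u,v$ in the same block of $\mathcal{P}$. The key bookkeeping idea is that each edge is \emph{charged} at the moment the first of its two endpoints is forgotten. The recurrences are then routine: a leaf has $c_t[\emptyset]=0$; an introduce node copies its child's table after deleting the new vertex from each $\pi$ (no edge is charged here, since by the tree-decomposition property all of the new vertex's neighbours in $V_t$ lie in $B_t$); a forget node removing $v$ sets $c_t[\pi]=\max_{\pi'}\bigl(c_{t'}[\pi'] + \sum\{w(u,v): (u,v)\in E,\ u\in B_t,\ u\sim_{\pi'} v\}\bigr)$, where the maximum ranges over partitions $\pi'$ of the child's bag that restrict to $\pi$ (equivalently, over the block of $\pi$ into which $v$ is inserted, or $v$ as a fresh singleton); and a join node with children $t_1,t_2$ (so $B_t=B_{t_1}=B_{t_2}$) sets $c_t[\pi]=\max\{c_{t_1}[\pi_1]+c_{t_2}[\pi_2]: \pi_1\vee\pi_2=\pi\}$, where $\pi_1\vee\pi_2$ is the join in the partition lattice (the transitive closure of the union of the two equivalence relations). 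Because $V_{t_1}\cap V_{t_2}=B_t$ and no edge joins $V_{t_1}\setminus B_t$ to $V_{t_2}\setminus B_t$, the join node neither double-charges an edge nor forces a spurious merge. The value of the optimal coalition structure is $c_{\mathrm{root}}[\emptyset]$.

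Correctness is proved by induction over the tree decomposition, invoking the standard separation property (a forgotten vertex never reappears, and every edge lies inside some bag) and checking at each node type that the recurrence exactly realizes the claimed maximum. For the complexity: the number of partitions of a set of size at most $k+1$ is the Bell number $B_{k+1}$, a constant depending only on $k$; the decomposition has $O(kn)$ nodes, each processed in time $O(B_{k+1}^2\cdot k^2)$; hence the algorithm runs in time polynomial in $n$, which places OPT-CS-WGG on bounded-treewidth graphs in \p.

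The main obstacle is the bookkeeping at the join node together with the edge-charging convention: one must verify that every edge contributes to exactly one $c_t[\pi]$ and that the partition-join operation captures precisely the consistency constraint between the two subtrees (in particular, that it is never beneficial to merge a coalition living entirely in $V_{t_1}\setminus B_t$ with one living entirely in $V_{t_2}\setminus B_t$, so that restricting to partitions obtained via $\vee$ loses nothing). Everything else follows the textbook tree-decomposition dynamic-programming template.
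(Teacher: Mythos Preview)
The paper does not give its own proof of this lemma; it is simply quoted from~\cite{Cowans05}. Your overall plan---dynamic programming over a nice tree decomposition, indexing states by partitions of the current bag, and charging each edge when the first of its endpoints is forgotten---is the standard one and is correct in outline.

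There is, however, a genuine error in your join rule. With your definition of $c_t[\pi]$ (maximum over partitions $\mathcal P$ of $V_t$ whose restriction to $B_t$ \emph{equals} $\pi$), the correct recurrence at a join node is simply
\[
c_t[\pi]\;=\;c_{t_1}[\pi]+c_{t_2}[\pi].
\]
Indeed, if $\mathcal P$ is any partition of $V_t$ restricting to $\pi$ on $B_t$, then its restrictions to $V_{t_1}$ and $V_{t_2}$ also restrict to $\pi$ on $B_t$; so only the case $\pi_1=\pi_2=\pi$ ever arises. Your lattice-join condition $\pi_1\vee\pi_2=\pi$ is too permissive and can strictly overcount. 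Take $B_t=\{a,b\}$, $V_{t_1}=\{a,b,c\}$ with $w(a,c)=+1$, $w(b,c)=-10$, and $V_{t_2}=\{a,b,d\}$ with $w(b,d)=+1$, $w(a,d)=-10$. Then $c_{t_1}[\{\{a\},\{b\}\}]=1$ (via $\{a,c\},\{b\}$) and $c_{t_2}[\{\{a,b\}\}]=0$ (via $\{a,b\},\{d\}$), so your rule gives $c_t[\{\{a,b\}\}]\ge 1$. But every partition of $\{a,b,c,d\}$ with $a,b$ in the same block has value at most $0$, so the true $c_t[\{\{a,b\}\}]=0$. (Adding a heavy positive edge $(a,b)$ makes this discrepancy propagate all the way to $c_{\mathrm{root}}[\emptyset]$.) The failure mode is exactly the one you flagged as ``the main obstacle'': combining an optimal $\mathcal P_1$ for $\pi_1\ne\pi$ with $\mathcal P_2$ for $\pi_2=\pi$ merges blocks of $\mathcal P_1$ through $B_t$, so the $V_{t_1}$-side value of the glued partition is \emph{not} $c_{t_1}[\pi_1]$. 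The $\pi_1\vee\pi_2$ form is correct for connectivity-type DPs (Steiner tree, Hamiltonicity) where the bag partition records an induced reachability relation that gluing may legitimately coarsen; here the bag partition \emph{is} the coalition structure on $B_t$, and it must agree on both sides.

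With the join rule replaced by $c_t[\pi]=c_{t_1}[\pi]+c_{t_2}[\pi]$, the rest of your argument (leaf, introduce, forget, running time $B_{k+1}^{O(1)}\cdot\mathrm{poly}(n)$) goes through and yields the claimed result.
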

% TODO: YB: Maybe return note?
%The assumption in~\cite{Cowans05} is that the tree decomposition of width $k$ is known, however for a fixed $k$ such tree decomposition can be computed in linear time~\cite{bodlaender1993linear}.

%\begin{proof}
%OPT-CS-WGG can be solved by solving a partitioning problem. The partitioning problem in graphs with known bounded tree-width can be solved in polynomial time ~\cite{Cowans05}.
%\end{proof}

\subsection{Constant Factor Approximation for Planar Graphs}
We provide a polynomial time constant approximation algorithm for planar graph games. Planar graphs model many real-world domains. For example, consider coalitions between countries. In many domains, synergies would only exist between neighbouring countries. We can define a graph game where countries have an edge between them only if they are neighbours, resulting in a \emph{planar} graph. Our method achieves a coalition structure avoiding all negative edges and gains a constant portion of the weights of the positive edges. The optimal value is at most the sum of the positive weights, yielding a constant approximation. First we define feasible sets which play an important role in our algorithm.

\begin{definition}
Given a planar graph $G$, denote by $E^+$ the set of positive edges, and by $E^-$ the set of negative edges. A subset $E' \subseteq E^+$ is \emph{a feasible set} iff there is a partition $P$ of vertices such that any edge $e \in E'$ is contained in one part of the partition, while all negative edges are cross-part edges. We say $P$ achieves $E'$. 
%avoids all negative edges and gains this set of positive edges $E'$, and possibly some other positive edges as well.
\end{definition}

We find partitions that achieve feasible sets $E_1, E_2, \cdots, E_k$ (each a subset of $E^+$), whose union $\cup_{i=1}^k E_i$ is $E^+$. Each positive edge is thus achieved by at least one of these $k$ partitions. The value of partition $i$ is at least $\sum_{e \in E_i} w(e)$ as we avoid all negative edges in our partitions, making the value of a partition be the sum of the positive weights achieved by it. The union of these $k$ feasible sets is the set of all  positive edges. Thus the sum of the values of the $k$ partitions is at least the sum of all positive weights. Picking the maximal value partition, we obtain a $k$-approximation algorithm. Our algorithm finds \emph{few} such partitions that still \emph{cover all positive edges}. 
In our algorithm we present a constant number of these partitions (feasible sets). We start with building some principal feasible sets that we use later in our algorithm. The first building block (feasible set) is a matching.
\begin{lemma}\label{lemma:MatchingOnePartition}
Every matching $M \subseteq E^+$ is a feasible set. In other words, there is a partition that avoids all negative edges, and achieve edges of $M$.
\end{lemma}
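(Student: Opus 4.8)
The plan is to exhibit an explicit partition that achieves $M$, so that feasibility follows directly from the definition. First I would take the partition $P$ of $V$ whose two-element parts are exactly the matched pairs $\{u,v\}$ for each edge $(u,v)\in M$, and whose remaining parts are the singletons $\{w\}$ for every vertex $w$ not covered by $M$. Because $M$ is a matching, the pairs $\{u,v\}$ are pairwise disjoint, so this is indeed a well-defined partition. By construction every edge $e=(u,v)\in M$ has both of its endpoints in the same (two-element) part, so $M$ is contained in $P$ in the sense required by the definition of a feasible set.

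Next I would verify that every negative edge is a cross-part edge of $P$. Take any $e=(x,y)\in E^-$. If neither $x$ nor $y$ is covered by $M$, they lie in distinct singleton parts. If exactly one of them, say $x$, is covered by $M$, then $y$ is in a singleton part different from the part of $x$. The only remaining case is that both $x$ and $y$ are covered by $M$; here I would use the convention (stated after Definition~\ref{l_def_graph_games}) that there is at most one edge between any two vertices: since $(x,y)$ is a negative edge it is not in $M\subseteq E^+$, so $\{x,y\}$ is not one of the matched pairs, and the matched pairs containing $x$ and $y$ respectively are disjoint, hence distinct parts. In all cases $x$ and $y$ lie in different parts of $P$, so all negative edges are cross-part, and $P$ achieves $M$.

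The main (indeed the only) subtlety is this last verification that no negative edge can be swallowed inside a part: it is exactly where the single-edge-between-vertices assumption and the disjointness of the matching edges are needed. Everything else is immediate. As a remark, the partition produced contains no negative edge, so its value is at least $\sum_{e\in M}w(e)=w(M)$; this is the property actually used when such feasible sets are later combined to cover $E^+$.
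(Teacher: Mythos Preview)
Your proof is correct and follows essentially the same approach as the paper: construct the partition whose parts are the matched pairs together with singletons for unmatched vertices, then use the no-parallel-edges assumption to rule out a negative edge lying inside a two-element part. Your case analysis is a bit more detailed, but the argument and the key step are identical.
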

\begin{proof}
Let $e_1, e_2, \cdots, e_a$ be the edges of a matching. We build a partition of the vertices. We have $a$ clusters for the $a$ edges of our matching. We put both endpoints of $e_i$ in cluster $i$. There are $n-2a$ remaining vertices as well. We put them in $n-2a$ separate single-vertex clusters, so we achieve the edges of $M$ in this partition. We show we avoid all negative edges. Suppose not, so there is a negative edge $e'(u,v)$ such that $u$ and $v$ are in the same cluster. Thus $u$ and $v$ are endpoints of an edge in the matching as well, so there are two edges between $u$ and $v$ in to contradiction to our assumption of having no parallel edges.
\end{proof}

%Our next blocks are the union of vertex-disjoint stars using at most 3 feasible sets and forests using 6 feasible sets. 
The second building block is slightly more elaborate. We prove that the union of vertex-disjoint stars can be covered using at most three feasible sets. A star is a subgraph formed of several edges that all share one endpoint called center vertex. In other words, a star is a vertex with some edges to some other vertices, and there are no other edges in the star between vertices.
\begin{lemma}
\label{lemma:StarswithThreePartition}
We can cover a union of several vertex-disjoint stars using at most 3 feasible sets. 
%Proof in Appendix.
\end{lemma}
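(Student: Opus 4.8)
The plan is to build three feasible sets $E_1,E_2,E_3$ by $3$-colouring, \emph{separately in each star}, the leaves of that star so that two leaves of the same colour are never joined by a negative edge, and then letting $E_j$ collect exactly those star edges whose leaf endpoint received colour $j$.

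First I would fix notation: let the vertex-disjoint stars be $S_1,\dots,S_m$, where $S_i$ has centre $c_i$ and leaf set $L_i$, and let $H_i$ be the graph on vertex set $L_i$ whose edges are the negative edges of $G$ joining two vertices of $L_i$. The key structural claim is that each $H_i$ is \emph{outerplanar}. Indeed, adjoining to $H_i$ one new vertex joined to every vertex of $L_i$ produces a subgraph of $G$ (the new vertex plays the role of $c_i$, which is already joined to every leaf by a positive star edge, and the edges of $H_i$ are genuine edges of $G$); since $G$ is planar this augmented graph is planar, which is precisely the standard characterization of outerplanarity. Outerplanar graphs are $2$-degenerate and hence $3$-colourable, so the vertices of $H_i$ can be properly coloured with colours $\{1,2,3\}$; performing this independently for every $i$ assigns to each leaf of each star a colour in $\{1,2,3\}$ such that no negative edge joins two equally-coloured leaves of the same star. (If some $H_i$ is disconnected this is harmless, since a disjoint union of outerplanar graphs is outerplanar, or one may simply colour each component on its own.)

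Next I would define, for $j\in\{1,2,3\}$, the set $E_j\subseteq E^+$ consisting of all star edges $(c_i,\ell)$ with $\ell\in L_i$ of colour $j$, and exhibit a partition $P_j$ achieving it: for each star $S_i$ take the part $\{c_i\}\cup\{\ell\in L_i:\ell\text{ has colour }j\}$, and make every other vertex of $G$ a singleton part. Feasibility is then a direct check. Every edge of $E_j$ plainly lies inside one part. A negative edge of $G$ either has its two endpoints in two different stars, or has an endpoint outside all stars, and is then automatically a cross-part edge because every non-singleton part is contained in a single star's vertex set; or it joins two leaves of a single star $S_i$, and then the colouring forbids both leaves from having colour $j$, so they lie in distinct parts of $P_j$ (note no negative edge can join $c_i$ to one of its leaves, as that would be parallel to a positive star edge). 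Finally, since every leaf receives some colour in $\{1,2,3\}$, every star edge belongs to some $E_j$, so $E_1\cup E_2\cup E_3$ is exactly the edge set of the union of the stars, and these three feasible sets cover it.

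I expect the only genuine obstacle to be the structural observation that the negative-edge graph on a star's leaves is outerplanar and therefore $3$-colourable; everything afterwards — choosing the partitions and verifying that all negative edges are cross-part — is routine.
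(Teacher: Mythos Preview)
Your argument is correct, and it takes a genuinely different route from the paper's.

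The paper proves the lemma by invoking the Four Colour Theorem: it fixes a proper $4$-colouring of the whole planar graph $G$, observes that in each star the leaves avoid the centre's colour and hence use at most three colours, and then groups leaves by colour class exactly as you do. Your proof replaces this global $4$-colouring by a local argument: for each star you look only at the subgraph $H_i$ of negative edges on the leaf set $L_i$, note that adjoining the centre $c_i$ as a universal vertex produces a planar subgraph of $G$, and conclude via the standard characterisation that $H_i$ is outerplanar and therefore $3$-colourable through $2$-degeneracy. The remaining verification (defining $P_j$ and checking that every negative edge is cross-part) is essentially the same in both proofs.

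The payoff of your approach is that it is strictly more elementary: you avoid the Four Colour Theorem entirely, relying only on the easy facts that $H$ is outerplanar iff $H\vee K_1$ is planar and that outerplanar graphs are $2$-degenerate. This also makes the algorithmic content cleaner, since greedily $3$-colouring a $2$-degenerate graph is trivial, whereas an efficient $4$-colouring of a planar graph is a substantial algorithm in its own right. The paper's approach, on the other hand, has the minor convenience that a single global colouring handles all stars at once; it also automatically separates \emph{all} edges (not just negative ones) within a colour class, though as you correctly note this stronger property is not needed for feasibility.
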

\begin{proof}
Assume there are $l$ stars with center vertices $v_1, v_2, \cdots, v_l$. In the star $i$, vertex $v_i$ has edges to vertex set $S_i$. The sets $S_1, S_2, \cdots, S_l$ are disjoint and they do not contain any of the center vertices.
We know that graph $G$ is planar, and therefore we can find a proper four-coloring for it. Consider vertex $v_i$. Non of the vertices in set $S_i$ has the color of vertex $v_i$, so the vertices of set $S_i$ are colored using only three colors. Without loss of generality, assume they are colored with colors $1, 2, 3$. Let $S_{i,1}$ be the set of vertices in $S_i$ with color one. Similarly we define sets $S_{i,2}$ and $S_{i,3}$. We do this for all center vertices.
Note that there is no edge inside set $S_{i,j}$ for $1 \leq i \leq l$, and $1 \leq j \leq 3$. But there might be edges between different sets.
Now here is the first partition that gives us the first feasible set. We put $v_i$ and all vertices of $S_{i,1}$ in one cluster, and we do this for all center vertices. So for each center vertex, we have a separate cluster. All remaining vertices of the graph go to separate single-vertex clusters. We achieve the edges between vertex $v_i$, and all vertices in $S_{i,1}$ for $1 \leq i \leq l$. We avoid all negative edges because there is no edge inside set $S_{i,1}$, and there is no negative edge between vertex $v_{i}$, and set $S_{i,1}$.
Similarly we use two other partitions to get the edges between $v_i$ and sets $S_{i,2}$ and $S_{i,3}$. So we can cover all these edges using three feasible sets.
\end{proof}

\begin{lemma}\label{lemma:Forest6Partitions}
The edges of a forest can be covered using 6 feasible sets. 
%Proof in Appendix.
\end{lemma}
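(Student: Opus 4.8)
The plan is to reduce this statement to Lemma~\ref{lemma:StarswithThreePartition}. Concretely, I would show that the edges of a forest can be partitioned into two classes, each of which is a disjoint union of vertex-disjoint stars (a ``star forest''); applying Lemma~\ref{lemma:StarswithThreePartition} to each class produces $3$ feasible sets, hence $6$ feasible sets in total. Since feasibility of a subset of $E^+$ depends only on its own certifying partition, the union of these $6$ feasible sets still covers every edge of the forest, which is exactly what is required (as with the earlier building blocks, the forest here is understood as a subgraph of $E^+$, so ``feasible set'' makes sense).

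The key construction is an edge $2$-coloring witnessing that forests have star arboricity at most $2$. Root each tree of the forest arbitrarily; for a non-root vertex $v$ write $p(v)$ for its parent and $d(v)$ for its depth, and assign to the edge $\{v,p(v)\}$ the color $d(v)\bmod 2$. Fix a color $i\in\{0,1\}$ and let $F_i$ be the subgraph of color-$i$ edges. For a vertex $u$: if $d(u)\equiv i\pmod 2$, the only color-$i$ edge possibly incident to $u$ is its own parent edge, since each edge from $u$ down to a child $w$ has color $d(w)\bmod 2=(d(u)+1)\bmod 2\neq i$; so $u$ has degree at most $1$ in $F_i$ (``leaf type''). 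If instead $d(u)\not\equiv i\pmod 2$, the parent edge of $u$ is not color $i$ while every edge from $u$ down to a child is color $i$; so in $F_i$ the vertex $u$ is joined only to some of its children (``center type''). No vertex is of both types.

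From this it follows that every connected component of $F_i$ is a star. The lower endpoint of any color-$i$ edge is the child on that edge, hence of leaf type, and it has degree exactly $1$ in $F_i$. A center-type vertex $u$ is therefore adjacent in $F_i$ only to a set of leaf-type vertices (its $F_i$-children), each of degree $1$, so the component of $u$ is exactly the star with centre $u$; and every component contains a center-type vertex (for a lone edge this is its upper endpoint). Distinct components are vertex-disjoint by definition, so $F_i$ is a union of vertex-disjoint stars. Applying Lemma~\ref{lemma:StarswithThreePartition} to $F_0$ and to $F_1$ yields $3+3=6$ feasible sets, and their union is $F_0\cup F_1$, i.e. the entire edge set of the forest, completing the proof.

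The only genuine obstacle is the combinatorial claim that the depth-parity coloring really splits a forest into two star forests (equivalently, that forests have star arboricity $\le 2$); once that is in hand, everything else is bookkeeping, since the $6$ feasible sets are obtained by two independent invocations of the already-established Lemma~\ref{lemma:StarswithThreePartition} and covering is simply taking a union.
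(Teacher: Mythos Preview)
Your proof is correct and is essentially the same as the paper's: both root each tree, color the edges by the parity of their depth, argue that each color class is a vertex-disjoint union of stars, and then invoke Lemma~\ref{lemma:StarswithThreePartition} twice to obtain $3+3=6$ feasible sets. Your treatment of why each $F_i$ is a star forest (via the leaf-type/center-type case analysis) is a bit more explicit than the paper's one-line observation that no color class contains a path on three edges, but the underlying construction and logic are identical.
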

\begin{proof}
We show the proof for one tree. The same should be done for other trees.
Make the tree $T$ rooted at some arbitrary vertex $r$. Now every vertex in the tree has a unique path to $r$. We color the edges of this tree using two colors, red and blue. The edges that have an odd distance to $r$ are colored red, and the edges with even distance to $r$ are blue. So the first level of edges that are adjacent to $r$ are colored blue, the next level edges are red, and so on. So we start from $r$, and alternatively color edges blue and red.
Considering the subgraph of blue edges, we cannot find a path of length four (3 edges) in it, and we know that there is no cycle in the graph. Thus, these blue edges can form only some disjoint stars. The same proof holds for the red edges. 
%We note that in this coloring, the blue edges form a union of some stars, and the same is true for the red edges. 
Using Lemma~\ref{lemma:StarswithThreePartition}, we cover the blue edges with three feasible sets, and the red edges with another three feasible sets. So, using at most $6$ feasible sets, every edge is covered in the tree.
\end{proof}

% New version using Nash-Williams Theorem: 

Now we just need to show that the set of positive edges in $G$ can be decomposed into a few number of forests. This can be implied by a direct application of Nash-Williams Theorem \cite{Nash-Williams}.
Let $G^+$ be the subgraph of $G$ with all positive edges. Clearly $G^+$ is also planar. Nash-Williams Theorem states that the minimum number of forests needed to partition the edges of a graph $H$ is equal to $\max_{S \subseteq V(H)} \frac{m_S}{n_S-1}$ where $m_S$, and  $n_S$ are the number of edges and vertices in set $S$ respectively. $V(H)$ is the set of all vertices in graph $H$. Since $G^+$ is planar, $m_S$ is at most $3n_S-6$ for every $S \subseteq V(G^+)$ which implies that three forests are sufficient to cover all edges of $G^+$. We should note that computing these forests can be done in polynomial time as Gabow and Westerman \cite{GabowWesterman} provided a polynomial time algorithm that makes Nash-Williams theorem constructive. We conclude that all positive edges in $G$ can be covered with $3 \times 6 = 18$ feasible sets which yields an $18$-approximation algorithm.

\subsection{Minor Free Graphs}
We generalize our results to Minor Free Graphs. A graph $H$ is a minor of $G$ if $H$ is isomorphic to a graph that can be obtained by zero or more edge contractions on a subgraph of $G$.
%A graph $H$ is a minor of graph $G$ if we can choose a subgraph of $G$, $G'$, and contract some edges of $G'$ to obtain $H$. 
$G$ is $H$-Minor Free, if it does not contain $H$ as a minor. A graph is planar iff it has no $K_5$ or $K_{3,3}$ as a minor~\cite{Wagner37} ($K_5$ is a complete graph with $5$ vertices, and $K_{3,3}$ is a complete bipartite graph with $3$ vertices in each part). 
%This Theorem is also called Kuratowski's Theorem \cite{Lovasz06}. 
We give an $O(h^2\log{h})$-approximation algorithm for OPT-CS-WGG in $H$-minor free graphs where $h$ is the number of vertices of graph $H$. This yields a constant factor approximation for planar graphs in particular, because they are $K_5$-minor free graphs. We use the following theorem~\cite{Thomason01}.
The main property of Minor Free graphs that make them tractable in this problem is sparsity. %The following 

\begin{theorem}\label{Thm:MinorFreeSparsity}
The number of edges of a $H$-Minor Free graph $G$ with $n$ vertices is not more than $cn$ where $c$ is equal to $(\alpha+o(1))h\sqrt{\log{h}}$, $h$ is the number of vertices in $H$, and $\alpha$ is a constant around $0.319$.
\end{theorem}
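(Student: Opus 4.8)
This is the extremal bound for complete minors due to Thomason~\cite{Thomason01}, so the plan is to invoke it and supply only the short reduction we actually need. First I would observe that every $H$-minor-free graph is automatically $K_h$-minor-free for $h=|V(H)|$: since $H$ is a subgraph of $K_h$ and ``is a minor of'' is transitive, any graph with a $K_h$ minor also has an $H$ minor, so $H$-minor-freeness implies $K_h$-minor-freeness. Hence it suffices to bound the number of edges of $K_h$-minor-free graphs, and Thomason's theorem for $K_t$ minors then gives exactly $|E|\le(\alpha+o(1))\,t\sqrt{\log t}\,n$ with $t=h$.

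For the $K_t$ bound itself I would reduce, in the standard way, to a minor-forcing statement: $|E|\le cn$ for all $n$ is equivalent (after passing to a subgraph whose minimum degree is at least half the average degree) to ``minimum degree of order $2c$ forces a $K_t$ minor''. To produce the minor one wants $t$ pairwise vertex-disjoint connected \emph{branch sets} that are mutually joined by an edge; I would try to take each branch set to be a breadth-first ball of small radius around a well-chosen, well-separated vertex, so that the boundaries are large enough to force an edge between every pair, while a vertex whose neighborhood is too sparse is dealt with by contracting a small ball around it and recursing on a smaller graph. Balancing these two regimes is exactly what makes minimum degree of order $t\sqrt{\log t}$ the right threshold.

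The hard part is not the order of magnitude --- the arguments of Mader and of Kostochka already give $O(t\sqrt{\log t})$ --- but extracting the sharp constant $\alpha\approx 0.319$. That requires Thomason's finer analysis of the \emph{extremal} $K_t$-minor-free graphs (roughly, disjoint unions of dense pseudorandom graphs of a carefully tuned density) together with a delicate optimization of the branch-set construction against that structure, which I would not reproduce here; see~\cite{Thomason01}. I would close by remarking that the $o(1)$ is taken as $h\to\infty$, so for any fixed forbidden minor $H$ the quantity $c$ is an absolute constant, which is all that the subsequent approximation algorithm for $H$-minor-free graphs requires.
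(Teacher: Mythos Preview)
Your proposal is correct in content, but it goes well beyond what the paper actually does. In the paper this theorem is not proved at all: it is simply quoted from Thomason~\cite{Thomason01} and then used as a black box to feed the Nash--Williams arboricity bound and the greedy $(2c+1)$-coloring. There is no reduction from $H$-minor-free to $K_h$-minor-free written out, and certainly no sketch of the branch-set argument or of where the constant $\alpha$ comes from.

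So the comparison is not ``same approach vs.\ different approach'' but rather ``cited result vs.\ attempted exposition''. Your reduction to the $K_h$ case via transitivity of the minor relation is correct and is the standard way to read Thomason's statement for arbitrary $H$; your outline of the Kostochka/Thomason method (pass to a subgraph of large minimum degree, grow small BFS balls as branch sets, contract-and-recurse when a neighborhood is too sparse) is an accurate high-level description of how the $O(t\sqrt{\log t})$ order is obtained, and you are right that pinning down $\alpha\approx 0.319$ is the delicate part that one should not try to reproduce. For the purposes of this paper, however, all that is needed is the statement itself together with the observation that $c$ is a fixed constant once $H$ is fixed---which you also note---so a one-line citation, as the paper does, suffices.
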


Our algorithm for planar graphs used \emph{sparsity} to make sure that there are low degree vertices at each level 
%(after removing some vertices), and 
and we also need to make sure that the graph is colorable with a few colors. Using sparsity we can find a proper coloring of these graphs using $2c+1$ colors. Any subgraph $G'$ of $G$ is also $H$-Minor Free, so its number of edges is at most $O(c|G'|)$. Thus the average degree of every subgraph of $G$ is at most $2c$, so in every subgraph of $G$, we can find some vertices with degree at most $2c$ (the average degree). We use this to get a proper $2c+1$-coloring of $G$. Let $v$ be a vertex in $G$ with degree at most $2c$. Clearly $G \setminus \{v\}$ is also $H$-minor free, and inductively we can find a proper $2c+1$-coloring for it. Vertex $v$ has at most $2c$ neighbors, so one of the $2c+1$ colors is not used by its neighbors. Thus we can find one appropriate color for $v$ among our $2c+1$ colors, and consequently have a proper $2c+1$-coloring for $G$.

% New version with Nash-Williams Theorem
Using theorem~\ref{Thm:MinorFreeSparsity}, we know that every subset $S$ of $G$ has at most $c|S|$ edges because every subgraph of $G$ is also $H$-minor free. We can then use the Nash-Williams Theorem \cite{Nash-Williams}, and Gabow and Westerman Algorithm \cite{GabowWesterman} to cover all positive edges of $G$ with $O(c)$ forests.
%Similarly to the algorithm for planar graphs, we decompose the vertices of graph $G$ into some levels. Every subgraph of $G$ has some vertices with degree at most $2c$, so we take the graph $G$ and put all vertices with degree at most $2c$ in level $1$, and remove them from the graph. We then repeat this process iteratively until the graph is empty. We now consider only the edges inside levels (clusters), so the degree of each vertex is at most $2c$. We cover all these edges using at most $2c+1$ matchings, so $2c+1$ feasible sets are enough to take care of all edges not crossing levels (intra-level edges).
%
%For cross-level edges, we direct the remaining graph and cover it using several forests. Again, we make every edge directed from the lower level to the upper one. Thus we make sure that the out degree of each vertex is at most $2c$, and that there is no directed cycle in the remaining graph. 
We also know that each forest can be decomposed into two unions of stars. The entire graph $G$ is colorable using $2c+1$ colors, so we need $2c+1-1=2c$ feasible sets to cover a union of stars. 
%We need $2c \cdot O(c) = O(c^2)$ feasible sets to take care of the edges between levels. 
We conclude that $2c \cdot 2 \cdot O(c) = O(c^2)$ feasible sets are enough to cover all positive edges, and get a $O(c^2)$ approximation algorithm by picking the best (maximum weight) feasible set. Since $c$ is $O(h\sqrt{\log{h}})$, our approximation factor is $O(h^2\log{h})$.

\subsection{Bounded Degree Graphs}
We now consider bounded degree graphs. A vertex's positive degree is the number of positive edges incident to it. Denote the maximum positive degree in $G$ as $\Delta$. Using the Vizing Theorem, the positive edges can be decomposed into $\Delta+1$ matchings (which are also feasible sets). This yields a $\Delta+1$ approximation algorithm. A polynomial time algorithm for finding the decomposition can be derived from the Vizing Theorem's proof. We give a \emph{linear time} algorithm for this problem: a randomized $(2+\epsilon)\Delta$ approximation with an expected running time $O(E \log{\Delta} / \epsilon)$ and $O(V+E)$ space\footnote{According to an anonymous reviewer, there are techniques to get rid of the $\log{\Delta}$ factor in the running time which results in higher space complexity of $O(E+V \cdot \Delta)$ instead of the linear space in our algorithm.} where $E$ is the number of edges. 

We pick an arbitrary ordering of the positive edges of the graph, and try to decompose them into $(2+\epsilon)\Delta$ matchings. We color the edges with $(2+\epsilon)\Delta$ colors such that no two edges with the same color share an endpoint. Assume that we are in step $i$, and wish to color the $i$-th edge in our ordering. Let $e$ be this edge with endpoints $u$ and $v$. We have already colored $i-1$ edges and some of those colored edges may have $u$ or $v$ as their endpoints, so we must avoid the color of those edges. For each vertex, we keep the color of its edges in a data structure. Initially the data structures for all vertices are empty. When we color an edge, we add its color to the data structure of its two endpoint vertices. We can use binary search trees to insert and search in $O(\log{\Delta})$ time, since we insert at most $\Delta$ colors in each data structure. For edge $e$, we must find a color that is not in the union of the data structures of vertices $u$ and $v$. There are at most $2(\Delta-1)$ colors in these two data structures, and there are $(2+\epsilon)\Delta$ colors in total. Thus if we randomly pick a color, with probability at least $\epsilon/2$, we can use this color for edge $e$. Checking whether a color is in a data structure can be done using a search query. Thus we can check in time $O(\log{\Delta})$ whether the randomly chosen color is good or not. If the color is already taken, we can try again. It takes at most $2/\epsilon$ times in expectation to find an available color. Thus for each edge we spend $O(\log{\Delta}/\epsilon)$ time to find a color, so the average running time of this decomposition is $O(E\log{\Delta}/\epsilon)$ in expectation. Finding the best feasible set (matching) does not take more than $O(E)$ time. Thus we get an $(2+\epsilon)\Delta$-approximation with almost linear running time.

\subsection{Treewidth Based Approximations}
Many hard problems are tractable for graphs with constant or bounded treewidth. We present a polynomial time $O(k^2)$-approximation algorithm where $k$ is the treewidth of the graph, without assuming that the treewidth of graph $G$ is constant or a small number. Algorithms for finding the treewidth of a graph only work in polynomial time when the treewidth is constant. Although we do not know the treewidth, we can still make sure that the approximation factor is not more than $O(k^2)$. We use the following lemma.

\begin{lemma}
\label{lemma_degree_k}
If $G$ has treewidth $k$, it has a vertex with degree at most $k$. 
%Proof in Appendix.
\end{lemma}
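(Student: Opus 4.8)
The plan is to use a tree decomposition of $G$ of minimum size and locate the desired low-degree vertex inside a leaf bag. Recall that a tree decomposition of $G$ is a tree $T$ together with bags $B_t \subseteq V(G)$ for $t \in V(T)$ such that (i) $\bigcup_t B_t = V(G)$, (ii) every edge of $G$ has both endpoints in some common bag, and (iii) for every vertex $v$ the set $\{t : v \in B_t\}$ induces a connected subtree of $T$; its width is $\max_t |B_t| - 1$, and the treewidth is the minimum width over all tree decompositions. So I would fix, among all tree decompositions of $G$ of width $k$, one $(T, \{B_t\}_t)$ with the fewest nodes.

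First I would dispose of the degenerate case: if $T$ has a single node, then $|V(G)| \le k+1$, so trivially every vertex of $G$ has degree at most $k$. Otherwise pick a leaf $\ell$ of $T$ and let $p$ be its unique neighbour. I claim that $B_\ell \not\subseteq B_p$: if we had $B_\ell \subseteq B_p$, then deleting the node $\ell$ (and its bag) from $T$ would still yield a tree decomposition of $G$ of width at most $k$ — condition (ii) survives because any edge covered only by $B_\ell$ is also covered by $B_p$, and the subtrees in (iii) stay connected — contradicting minimality of the number of nodes. Hence I may fix a vertex $v \in B_\ell \setminus B_p$.

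Next I would pin down where $v$ can occur. By property (iii), the set of nodes of $T$ whose bag contains $v$ induces a connected subtree; it contains $\ell$ but not $p$, and since the only neighbour of $\ell$ in $T$ is $p$, this subtree must be exactly $\{\ell\}$. Thus $B_\ell$ is the only bag containing $v$. Now for any neighbour $u$ of $v$ in $G$, property (ii) supplies a bag containing both $u$ and $v$, and that bag can only be $B_\ell$, so $u \in B_\ell$. Therefore $N_G(v) \subseteq B_\ell \setminus \{v\}$, and since $|B_\ell| \le k+1$ we conclude $\deg_G(v) \le k$, which is what we wanted.

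The only step that needs a little care is the reduction to a minimum-size tree decomposition — verifying that pruning a leaf whose bag is contained in its parent's bag preserves all three axioms and does not raise the width — but this is entirely routine bookkeeping; everything else is an immediate consequence of the running-intersection property (iii).
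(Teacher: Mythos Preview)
Your proof is correct and follows essentially the same approach as the paper: take a leaf bag in a width-$k$ tree decomposition, argue (via minimality/pruning) that it contains a vertex $v$ not in the adjacent bag, and conclude from the running-intersection and edge-cover axioms that all neighbours of $v$ lie in that single bag of size $\le k+1$. Your write-up is in fact more careful than the paper's, handling the single-node case explicitly and spelling out why pruning preserves the decomposition axioms.
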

\begin{proof}
Since $G$ has treewidth $k$, there is a tree $T$ such that every vertex of $T$ has a subset of size at most $k+1$ of vertices of $G$. 
%We know that the 
The vertices of $T$ that contain a vertex of $G$ form a connected subtree. We also know that the endpoint vertices of each edge in $G$ are in the set of at least one vertex of $T$ together. Now we can prove our claim. Consider a leaf $v$ of $T$. Let $u$ be the father of $v$. These two vertices have two subsets of vertices of $G$ like $S_u$ and $S_v$. If $S_v$ is a subset of $S_u$, there is no need to keep vertex $v$ in our tree. We can delete it from $T$, and the remaining tree is also a proper representation of graph $G$. So we know that there is at least a vertex of $G$ like $x$ which is in $S_v$, and not in $S_u$. Clearly $v$ is the only vertex of $T$ that contains $x$. Otherwise the vertices of $T$ that contain $x$ do not form a connected subgraph. So vertex $x$ can have neighbors only in set $S_v$ which means that $x$ has at most $k+1-1=k$ neighbors.
\end{proof}

Removing a vertex from a graph does not increase its treewidth, so we can iteratively find vertices of degree at most $k$, and delete them. Thus we can find a $(k+1)$ proper coloring of vertices of $G$. We use the same decomposition we used for planar graphs, so we decompose the positive edges into $O(k)$ matchings and unions of stars. Each matching is a feasible set, and each union of stars can be decomposed into $k+1-1=k$ feasible sets, as we can color the graph with $k+1$ colors. Thus $O(k^2)$ feasible sets are enough to cover all positive edges yielding an $O(k^2)$ approximation algorithm. Note that we start with a value of $k$, and we keep deleting vertices with degree at most $k$. If every vertex is deleted after some number of iterations, we achieve the desired structure. Otherwise at some point the degree of each vertex is greater than $k$, indicating that the treewidth is more than $k$. Thus, we can find the minimum $k$ for which every vertex is deleted after some steps, and that $k$ is at most the treewidth of $G$.

\section{Conclusions and Related Work}
\label{l_related_work}
Cooperation is a central topic in algorithmic game theory. We considered computing the optimal coalition structure in WGGs. We showed that the problem is \np-hard, but restrictions on the input graph, such as being a tree or having bounded treewidth result in tractable algorithms for the problem. We showed the problem is hard for planar graphs, but provided a polynomial constant approximation algorithm for this class and other classes. 

WGGs are a well-known representation of cooperative games, and offer a simple and concise way of expressing synergies. One limitation of our approach is that some cooperative games cannot be expressed as a WGG. A general representation of a cooperative game is a table mapping any subset of the agents to the utility it can achieve (i.e. a table with size exponential in the number of the agents). Although the WGG representation is very concise for some games, and requires much less space than the exponential size table, some games cannot be expressed as WGGs. Further, even for games given in another representation language and that can be expressed as WGGs, there does not always exist a tractable algorithm for converting the game's representation to a WGG. To use our methods, one must have the input game given as a WGG. Since the WGG representation is a very prominent representation language for cooperative games, we believe our approach covers many important domains.  
%
%that many interesting domains can be expressed as WGGs.

Much work in algorithmic game theory has been dedicated to team formation, cooperative game representations and methods for finding optimal teams game theoretic solutions. 
Several papers describe representations of cooperative domains based on combinatorial structures~\cite{deng_papa_1994,IeoSho06,bachrach2010coalitional,ohta2006compact,bachrach2009power} 
%bachrach2008power
and a survey in \cite{bilbao:2000:cooperative}. A detailed presentation of such languages 
%and an introduction to representing and reasoniong about agent coalitions 
is given in~\cite{coopgt-book,shoham2009multiagent}. 
Generation of the optimal coalition structure received much attention~\cite{shehory-kraus:1998a,sandholm1999coalition,rahwan2007anytime,rahwan2008improved} %rahwan2009coalition,rahwan2008coalition,
due to its applications, such as vehicle routing and multi-sensor networks. 
%Generation of the optimal coalition structure received much attention~\cite{rahwan2007anytime,rahwan2008improved} %rahwan2009coalition,rahwan2008coalition,
%along with its applications, such as vehicle routing~\cite{sandhlom1997coalitions} and multi-sensor networks~\cite{dang2006overlapping}.  
%
% Add AAAI papers
An early approach ~\cite{shehory-kraus:1998a} focused on overlapping coalitions and gave a loose approximation algorithm. Another early approach~\cite{sandholm1999coalition} has a worst case complexity of $O(n^n)$, whereas dynamic programming approaches~\cite{YunYeh1986} have a worst case guarantee of $O(3^n)$. Such algorithms were examined empirically in~\cite{larson2000anytime}.

Arguably, the state of the art method is presented in~\cite{rahwan2008improved}. It has a worst case runtime of $O(n^n)$ and offers no polynomial runtime guarantees, but in practice it is faster than the above methods. All these methods assume a black-box that computes the value of a coalition, while we rely on a specific representation. Another approach solves the coalition structure generation problem~\cite{bachrach2010coalitional}, but relies on a different representation. 
A fixed parameter tractable approach was proposed for typed-games~\cite{aziz2011complexity} (the running time is exponential in the number of agent ``types''). However, in graph games the number of agent types is unbounded, so this approach is untractable. 
%
%Another approach has been suggested for typed-games~\cite{aziz2011complexity}, where agents fall into a bounded number of equivalence classes (the running time is exponential in the number of such types). However, in the graph games we study the number of agent types is unbounded, so this approach is untractable. 
%The approach of~\cite{dang2006overlapping}, which suggests a polynomial algorithm for a restricted case of multi-sensor networks, is also somewhat similar to our method,  but uses a different representation. Other work of similar nature that uses different representation languages includes~\cite{elkind2008coalition,ohta-coalition}.
%
%, which handles Weighted Voting Games (WVGs), and~\cite{ohta-coalition} which uses CSP techniques for Synergy Coalition Groups, MC-nets and Multi-Issue Domains. However, WVG are an extremely limited representation, unlike CSGs which can represent any game, and the results in~\cite{ohta-coalition} are hardness results (although the problems are reduced to Mixed Integer Programs). 
In contrast to the above approaches, we provide polynomial algorithms and sufficient conditions that guarantee various \emph{approximation ratios} for WGGs~\cite{deng_papa_1994}.

% TODO: YB: maybe return this?
This paper ignored the game theoretic problem of coalitional stability. While the structures we find do maximize the welfare, they do so in a potentially unstable manner. When agents are selfish, and only care about their own utility, the coalition structure may be broken when some agents decide to form a different coalition, improving their own utility at the expense of others. It would be interesting to examine questions relating to solution concepts such as the core, the nucleolus or the cost of stability\cite{oai:xtcat.oclc.org:OCLCNo/ocm19736643,schmeidler:1163,bachrach2009costAAMAS,bachrach2009costSAGT}.

Several directions remain open for future research. First, since solving the coalition structure generation problem is hard for general WGGs, are there alternative approximation algorithms? Obviously, one can use the algorithms for general games, however we believe a better complexity can be achieved for WGGs than for general games. Second, focusing on the game theoretic motivation for this work, it would be interesting to examine \emph{core-stable} coalition structures rather than just optimal ones. 
It is not known whether there exists a PTAS\footnote{Polynomial-Time Approximation Scheme.} for planar graphs or not. Though one might hope to obtain such a PTAS by combining Baker's 
approach with our algorithm for solving bounded treewitdh graphs, such a direct approach fails\footnote{See appendix for a complete discussion.}. 
Finally, it would be interesting to examine other classes of graphs where one can solve the coalition structure generation in polynomial time. Also, similar tractability results for coalition structure generation could be devised by using other representation languages.

\section{Acknowledgement}
The authors want to thank the anonymous reviewer who suggested applying Nash-Williams Theorem to partition the edges of graphs into forests. This simplified our previous analysis and improved the approximation guarantees of our algorithm. We also thank the same reviewer for providing an alternative algorithm for bounded degree graphs with the same approximation guarantee to get rid of the $\log \Delta$ factor in the running time. We presented our own algorithm as it has linear space complexity in the number of edges of the graph. Our algorithm can have up to a $\Delta$ factor less space complexity than the suggested approach.
 
\bibliographystyle{plain}
\bibliography{csgg-v4}

%\pagebreak

\section{Appendix I: Detailed Proofs}

{\bf Theorem} \ref{l_thm_opt_cs_hard_stsg}
OPT-CS-WGG is \np-complete, and assuming $\mathrm{P} \neq \mathrm{NP}$, there is no polynomial time $O(n^{1/2-\epsilon})$-approximation algorithm for it where $n$ is the number of vertices and $\epsilon$ is any positive constant. There is no polynomial time $O(n^{1-\epsilon})$-approximation algorithm for this problem unless $\mathrm{NP}=\mathrm{ZPP}$.

\begin{proof}
We reduce an Independent Set instance $G=\tup{V,E}$ to a OPT-CS-WGG instance. The reduced graph $G'=\tup{V',E'}$ contains all the vertices and edges of $G$ (so $V \subseteq V'$ and $E \subseteq E'$), and one additional vertex $s$. The weights of the original edges are identical, and are all $w(e) = -k$ where $k > |V|$. Vertex $s$ is connected to any vertex in $v \in V$ with an edge $e$ with weight $w(e) = 1$. Graph $G$ has an independent set of size $k$ iff there exists a partition for graph $G'$ with value $k$.

If $G$ has an independent set $S$ with size $k$, we can create a coalition structure $CS$ with value $k$ as follows.
We put (the additional) vertex $s$, and the $k$ vertices in set $S$ in one coalition. For every other vertex $v \notin S$, we make a separate coalition with only vertex $v$ in it, so the rest of the vertices are put in $n-k$ separate coalitions where $n$ is the number of vertices of graph $G$.
A coalition with one vertex has value zero. Thus the value of the whole coalition structure is equal to the value of the coalition that includes $s$. There is no negative edge between vertices of this coalition, because set $S$ is an independent set in graph $G$. There are $k$ edges with weight $+1$ in this coalition, so the value of the coalition structure is equal to $k$.

On the other hand, if the optimal coalition structure has value $k'$ in graph $G'$, we can find an independent set of size $k'$ in $G$.
We note that the absolute value of a negative edge is more than the sum of all positive edge weights because there are $n=|V|$ positive edges with weight $+1$. We have the option of putting each vertex in a separate coalition and achieving value zero. Thus, we never put a negative edge in a coalition in the optimal coalition structure, as this obtains a negative value. Thus, the value of the optimal coalition structure is the number of positive edges we get through the coalitions. However, all positive edges are between vertex $s$ and the other vertices, so the value of the coalitions is the number of vertices from graph $G$ that we put in the coalition of vertex $s$. We can safely put the rest of the vertices in separate coalitions as there are no positive edges between them, so there is no benefit in putting them in the same coalition. In the optimal solution the vertices we put in the same coalition as vertex $s$ have no negative edges between them, so they form an independent set in graph $G$.

From the analysis, we conclude that the above-mentioned is a parsimony reduction between the independent set problem and the OPT-CS-WGG problem. H{\aa}stad proved that there exists no polynomial time $O(n^{1/2-\epsilon})$-approximation algorithm for independent set problem assuming $\mathrm{P} \neq \mathrm{NP}$, he also proved that there is no polynomial time $O(n^{1-\epsilon})$-approximation algorithm for it unless $\mathrm{NP}=\mathrm{ZPP}$ \cite{Hastad96}. So the same hardness results work for the OPT-CG-WGG problem. For additional discussion of optimal coalition structures in graph games, see~\cite{aziz2011complexity}.
\end{proof}

\noindent {\bf Theorem} \ref{theorem_planar_hardness}
OPT-CS-WGG in planar graphs is \np-complete.

\begin{proof}
We use a reduction from the Independent Set problem restricted to planar graphs, which is also \np-complete \cite{Garey76}.
Given a planar graph $G=\tup{V,E}$, we transform the corresponding Independent Set instance it to the following planar OPT-CS-WGG instance $\tup{V',E',w}$.
Each node $i\in V$ of degree $d$ is replaced with gadget $G_i$ with $2d+1$ nodes and $4d$ edges.
This gadget can be thought of as a $d$-sided polygon whose sides correspond to edges in ${\cal N}(i)\equiv\{(i,j)\in E\}$
(figure~\ref{fig:planar:i}(a) shows an example for $d=4$). The side corresponding to $e$ has nodes $i^-_e,i_e,i^+_e\in V'$
arranged in the clockwise order. Note, if $e,e'\in E$ are two consecutive edges in ${\cal N}(i)$ taken in the clockwise
order then $i^+_e=i^-_{e'}$. Gadget $G_i$ also has the center node $i\in V'$ connected to all other $2d$ nodes of $G_i$.
The edge weights are as follows: $w(i^-_e,i_e)=w(i_e,i^+_e)=-dC$ where $C$ is a sufficiently large constant (namely, $C>|V|$),
and the weights of edges from $i$ to other nodes in $G_i$ are $+C$, except for one edge $(i,i^+_e)$ which has weight $C+1$.
Here edge $e\in{\cal N}(i)$ can be chosen arbitrarily.

Each edge $e=(i,j)\in E$ is transformed to gadget $G_e$ as shown in figure~\ref{fig:planar:e}.
Note, $G_e$ involves nodes $i^-_e,i^+_e,j^-_e,j^+_e$ and some internal nodes
which are connected only to nodes in $G_e$.

Given a coalition structure $CS$ in $\tup{V',E'}$, we denote $v_i(CS)$ and $v_e(CS)$ to be the contribution to the total cost
of subgraphs $G_i$ and $G_e$ respectively; thus, $v(CS)=\sum_{i\in V}v_i(CS)+\sum_{e\in E}v_e(CS)$.
We also denote $v^\ast_i=\max v_i(CS)$ and $v^\ast_e=\max v_e(CS)$. We say that coalition structure $CS$
is {\em feasible for $G_i$} if $v_i(CS) > v^\ast_i-C$, and it is {\em feasible for $G_e$} if $v_e(CS)>v^\ast_e-C$.
%or equivalently if $v_e(CS)=v^\ast_e$ (since weights of $G_e$ are integer multiples of $C$).
We denote $\sim$ to be the equivalence relation on $V'$ induced by $CS$.

\begin{figure}[t]
\centering
\begin{tabular}{@{\hspace*{0pt}}c@{\hspace*{1pt}}c@{\hspace*{-4pt}}c@{\hspace*{0pt}}c@{\hspace*{-17pt}}c}
%
%\raisebox{20pt}{\psfig{figure=,scale=0.2}} &
%\raisebox{50pt}{\psfig{figure=,scale=0.2}} &
%\psfig{figure=,scale=0.60} &
%\psfig{figure=,scale=0.60} &
%\psfig{figure=,scale=0.60} \vspace{-25pt} \\
% & (a) & & (b)~~~~~ & (c)~~~~~~

%
\raisebox{20pt}{\includegraphics[scale=0.1]{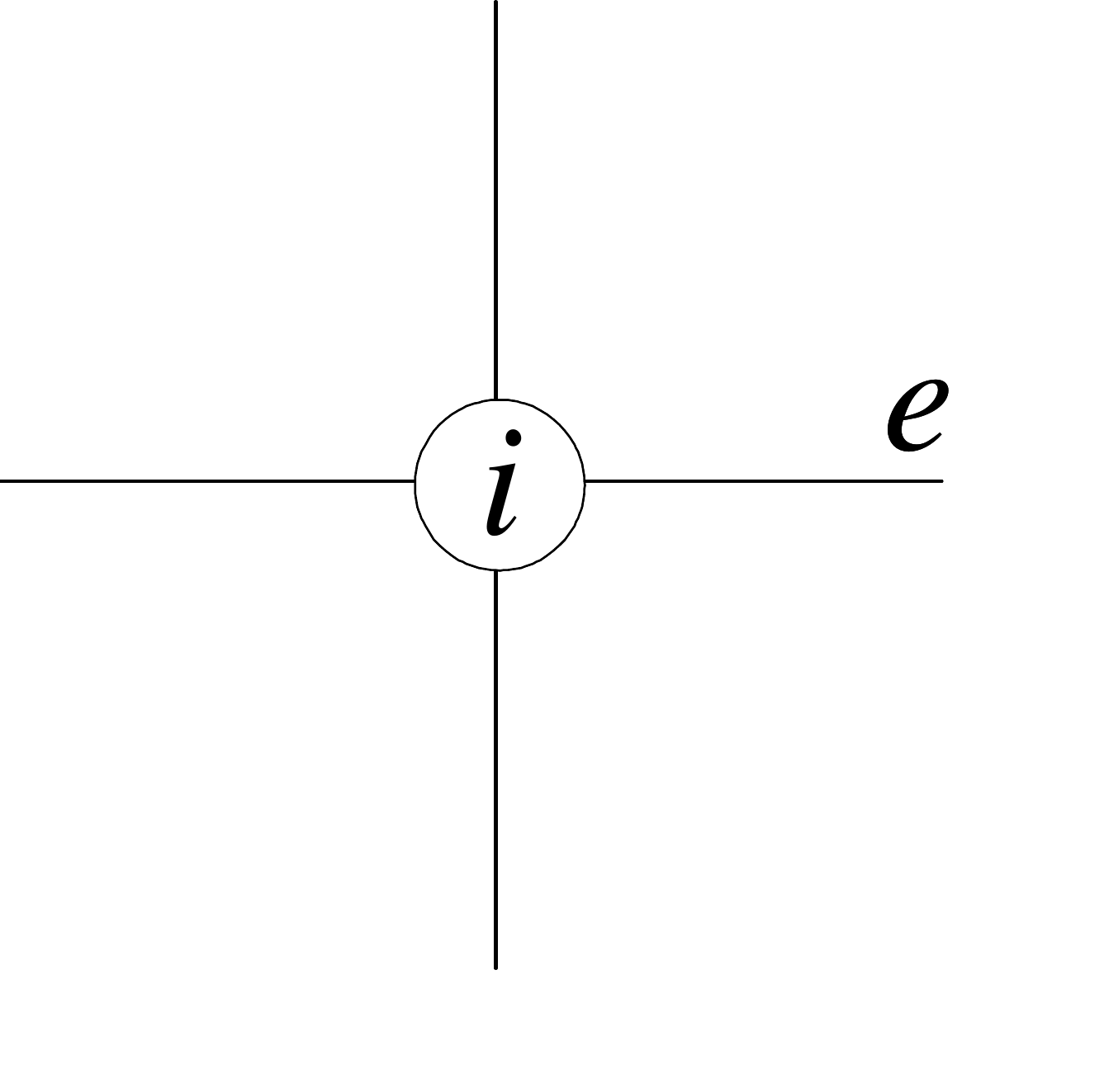}} &
\raisebox{30pt}{\includegraphics[scale=0.2]{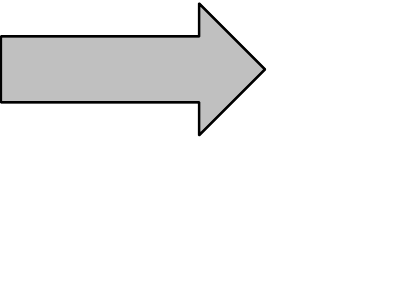}} &
\includegraphics[scale=0.4]{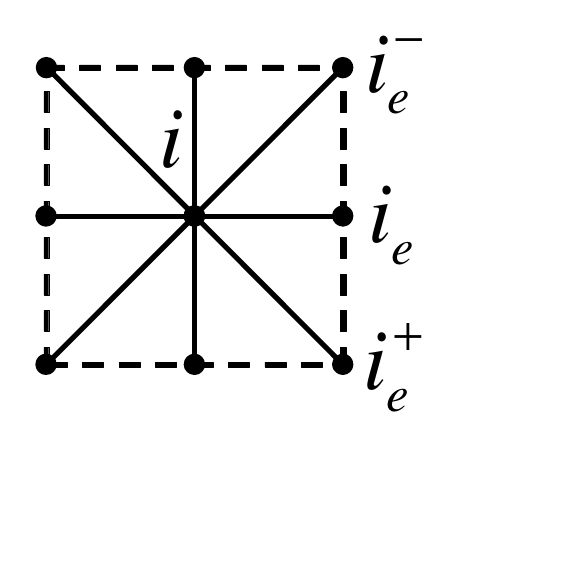} &
\includegraphics[scale=0.4]{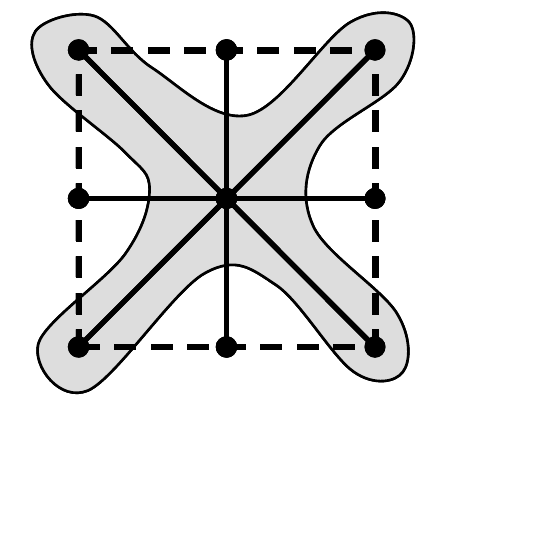} &
\includegraphics[scale=0.4]{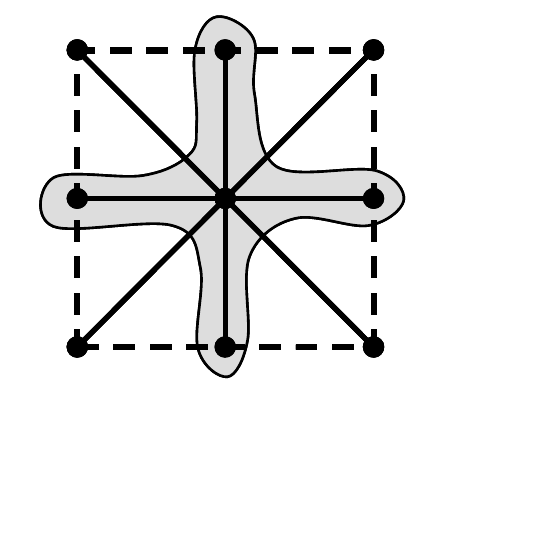} \vspace{-25pt} \\
 & (a) & & (b)~~~~~ & (c)~~~~~~
\end{tabular}
\caption{(a) node $i\in V$ of degree $d=4$ transformed to gadget $G_i$.
(b),(c) are feasible partitionings of $G_i$. (b) encodes the event that $i$ is in the independent set, while (c) encodes the opposite event.
}
\label{fig:planar:i}
\end{figure}

Consider node $i\in V$ of degree $d$. It can be seen that $v^\ast_i=dC+1$, and $CS$ is feasible for $G_i$
iff exactly one of the following holds: (1) $i^+_e\sim i$ and $i_e\nsim i$ for all $e\in{\cal N}(i)$
(in which case $v_i(CS)=v^\ast_i$),
or (2) $i^+_e\nsim i$ and $i_e\sim i$ for all $e\in{\cal N}(i)$ (in which case $v_i(CS)=v^\ast_i-1$).
These cases are illustrated in figure~\ref{fig:planar:i}(b,c).
Case (1) will encode the event that $i$ is in the independent set,
and case (2) will encode the opposite event.

Let $S^\ast$ be a maximum independent set in $\tup{V,E}$.
We define coalition structure $CS^\ast$ as follows:
(i) for each $i\in V$ select a feasible partitioning of $G_i$ according to $S$;
nodes of $G_i$ not connected to the center node $i$ are assigned to singleton
partitions. Partitions in graphs $G_i,G_j$ for $i\ne j$ do not overlap.
(ii) For each $e=(i,j)\in E$ select partitioning of $G_e$
that does not affect the equivalence relations between nodes in $\{i^-_e,i^+_e,j^-_e,j^+_e\}$
and has the maximum possible value of $v_e(CS^\ast)$ among such partitionings.

\begin{figure}[t]
\centering
\begin{tabular}{@{\hspace*{5pt}}c@{\hspace*{5pt}}c@{\hspace*{5pt}}c}
\includegraphics[scale=0.15]{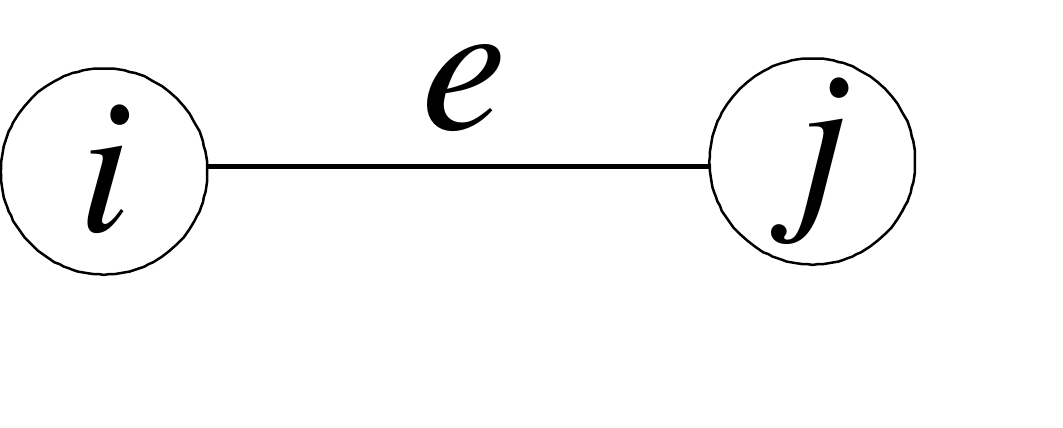}&
\raisebox{5pt}{\includegraphics[scale=0.15]{arrow.pdf}} &
\raisebox{-35pt}{\includegraphics[scale=0.45]{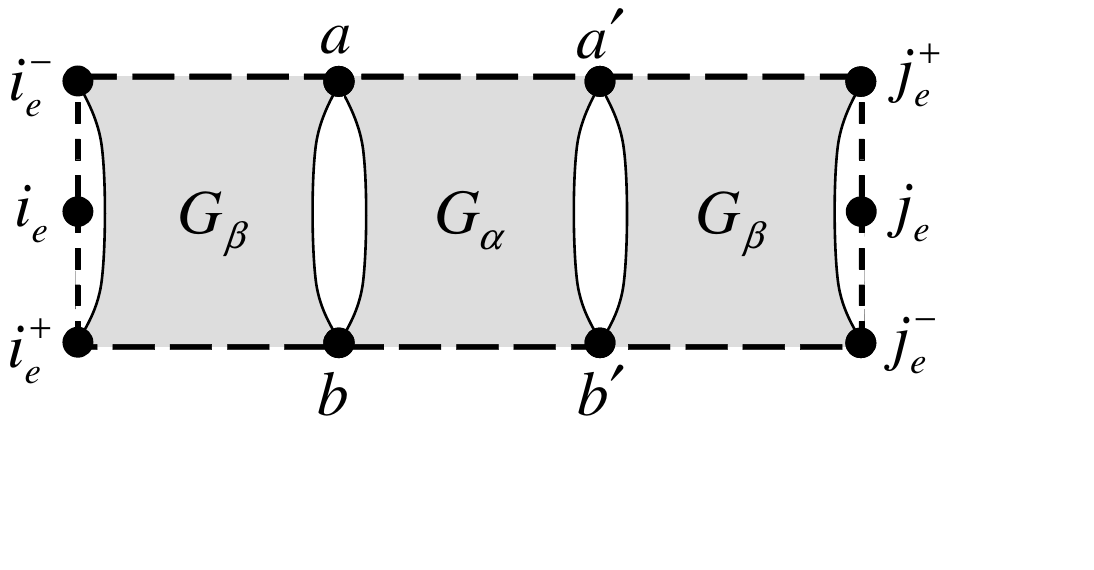}} \vspace{-20pt}
%\psfig{figure=e0.pdf,scale=0.2} &
%\raisebox{2pt}{\psfig{figure=arrow.pdf,scale=0.2}} &
%\raisebox{-45pt}{\psfig{figure=e1.pdf,scale=0.6}} \vspace{-20pt}
%%%

%
\end{tabular}
\caption{Each edge $(i,j)\in E$ is transformed to gadget $G_e$ consisting of three subgraphs $G_\beta$, $G_\alpha$, $G_\beta$
of figure~\ref{fig:planar:g}.
}
\label{fig:planar:e}
\end{figure}

\begin{figure}[t]
\centering
\begin{tabular}{c@{\hspace*{-10pt}}c@{\hspace*{-20pt}}c@{\hspace*{-20pt}}c}
\includegraphics[scale=0.50]{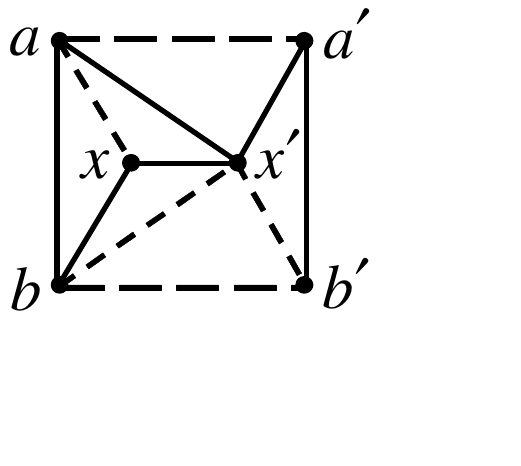} &
\includegraphics[scale=0.50]{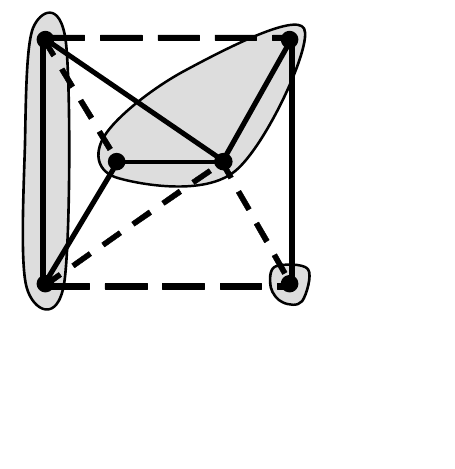} &
\includegraphics[scale=0.50]{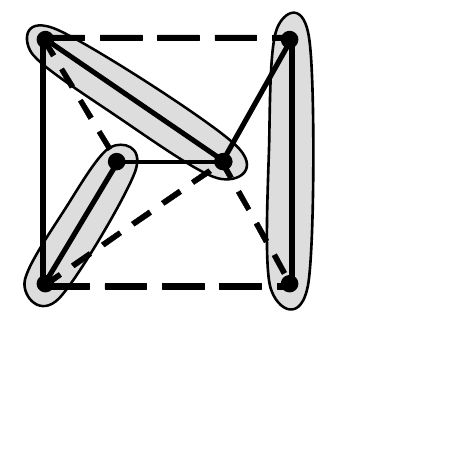} &
\includegraphics[scale=0.50]{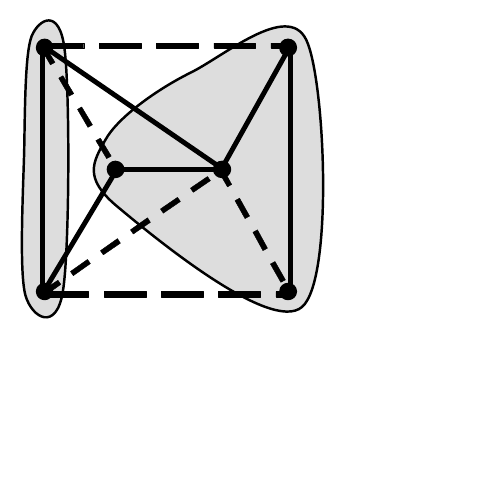} \vspace{-25pt} \\

(a)~~~~~~~~~ & (b)~~~~~~~~~ & (c)~~~~~~~~~ & (d)~~~~~~~~~
\end{tabular}
\caption{(a) Topology of graphs $G_\alpha$ and $G_\beta$. The weights in $G_\alpha$ are as follows:
for solid edges
$w(a,b)=w(b,x)=w(x,x')=w(x',a)=w(x',a')=w(a',b')=+2C$,
for internal dashed edges
$w(a,x)=w(b,x')=w(x',b')=-2C$,
and for horizontal side edges $w(a,a')=w(b,b')=-8C$.
The weights in $G_\beta$ are the same except for two edges: $w(a,b)=w(a',b')=+C$.
(b),(c),(d) are optimal partitionings of $G_\alpha$,
and (b),(c) are optimal partitionings of $G_\beta$.
}
\label{fig:planar:g}
\end{figure}

Consider edge $e=(i,j)\in E$. We prove two facts:
\begin{itemize}\addtolength{\labelwidth}{-12pt}
\item[(a)] $v_e(CS^\ast)=v^\ast_e$,
and thus 
$
v(CS^\ast)
=\sum_{i\in S^\ast} v^\ast_i + \sum_{i\in V-S^\ast} (v^\ast_i-1) + \sum_{e\in E} v^\ast_e 
=\sum_{i\in V} v^\ast_i + \sum_{e\in E} v^\ast_e + |S^\ast| - |V|.
$
\item[(b)] If $CS$ is feasible for $G_i$, $G_j$ and $G_e$ then $i$ and $j$ cannot be both in the independent set,
i.e.\ either $i\sim i_e$ or $j\sim j_e$.
\end{itemize}
This will imply that $CS^\ast$ is an optimal coalition structure, and so solving constructed OPT-CS-WGG instance
will recover $|S^\ast|$. Indeed,
if $v(CS)\ge v(CS^\ast)$ for some $CS$ then $v(CS)>\sum_{i\in V} v^\ast_i + \sum_{e\in E} v^\ast_e - C$.
Thus, $v_i(CS) > v^\ast_i-C$ for all $i\in V$ and $v_e(CS) > v^\ast_e-C$ for all $e\in V$,
and so $CS$ is feasible for all gadgets $G_i$ and $G_e$.
This means that $CS$ defines some set $S\subseteq V$, and by property (b) this set is independent, implying
$v(CS)
\le \sum_{i\in S} v^\ast_i + \sum_{i\in V-S} (v^\ast_i-1) + \sum_{e\in E} v^\ast_e 
= \sum_{i\in V} v^\ast_i + \sum_{e\in E} v^\ast_e + |S| - |V| \le v(CS^\ast)$.

To prove (a) and (b), we first analyze optimal partitionings of graphs
$G_\alpha$ and $G_\beta$ specified in figure \ref{fig:planar:g}.

\underline{Graph $G_\alpha$}~~ Let $v_\alpha(CS)$ be the contribution of $G_\alpha$ to the total cost.
If the restriction of $CS$ to $G_\alpha$ is as shown in figure \ref{fig:planar:g}(b,c,d) then $v_\alpha(CS)=6C$.
This is the maximum possible cost, i.e.\ $v^\ast_\alpha\equiv\max v_\alpha(CS)=6C$. Indeed, the subgraph induced by nodes $\{a,b,x,x'\}$ cannot
contribute cost larger than $4C$, and the subgraph induced by nodes $\{a',b',x'\}$ cannot
contribute cost larger than $2C$.
Clearly, we must have $a\nsim a'$ and $b\nsim b'$, otherwise $v_\alpha(CS)\le v^\ast_\alpha-2C$.
Assuming that these conditions hold, $4$ cases are possible:
\begin{itemize}
\item[] (i) $a\sim b$, $a'\nsim b'$;
\item[] (ii) $a\nsim b$, $a'\sim b'$;
\item[] (iii) $a\sim b$, $a'\sim b'$;
\item[] (iv) $a\nsim b$, $a'\nsim b'$.
\end{itemize}
In the first $3$ cases there exist a partitioning $CS$ of $G_\alpha$
with $v_\alpha(CS)=v^\ast_\alpha$ (see figure \ref{fig:planar:g}(b,c,d)).
In case (iv), however, we have $v_\alpha(CS)\le v^\ast_\alpha-2C$.
Indeed, edges $(a,b)$ and $(a',b')$ are not contributed, and subgraphs induced by $\{a,b,x,x'\}$ and
$\{a',b',c'\}$ contribute not more than $4C$ and $2C$ respectively. Furthermore, if the first subgraph
contributes more than $2C$ then we must have $a\sim x'$ and thus $a'\nsim x'$, so the second subgraph does not contribute.

\underline{Graph $G_\beta$}~~ Again we must $a\nsim a'$ and $b\nsim b'$, otherwise $v_\beta(CS)\le v^\ast_\beta-C$.
Assume that these conditions hold. Then in cases (i), (ii) there exist a partitioning $CS$ of $G_\beta$
with $v_\beta(CS)=v^\ast_\beta$ (as shown in figure \ref{fig:planar:g}(b,c)),
and in cases (iii), (iv) we have $v_\alpha(CS)\le v^\ast_\alpha-C$.
This can derived by observing that $G_\beta$ is obtained from $G_\alpha$ by subtracting weight $C$ from edges $(a,a')$ and $(b,b')$,
and applying the analysis for $G_\alpha$ above.

We are now ready to prove properties (a) and (b). We use the notation from figure~\ref{fig:planar:e}.
It can be checked $v^\ast_e=v^\ast_\beta+v^\ast_\alpha+v^\ast_\beta$, and we have the following:
$CS$ is feasible for $G_e$ $\quad\Leftrightarrow\quad$ $v_e(CS)=v^\ast_e$ $\quad\Leftrightarrow\quad$
$v^\ast_\alpha(CS)=v_\alpha$, $v^\ast_\beta(CS)=v_\beta$ for both copies of graph $G_\beta$ involved in $G_e$.

\vspace{3pt}
\noindent {\bf (a)} Let $\sim$ be the equivalence relation induced by $CS^\ast$.
Three cases are possible:
\begin{itemize}
\item $i\notin S^\ast$, $j\notin S^\ast$. Then $i^-_e\!\nsim i^+_e$, $j^-_e\!\nsim j^+_e$,
and so we can choose a partitioning $CS$ of $G_e$ with $v_e(CS)=v^\ast_e$ by compositing optimal partitionings (c), (d), (b) from figure~\ref{fig:planar:g}
for the three subgraphs $G_\beta$, $G_\alpha$, $G_\beta$ respectively. Clearly, these partitionings are consistent with each other; we have $a\sim b$ and $a'\sim b'$.
\item $i\in S^\ast$, $j\notin S^\ast$. Then $i^-_e\!\sim i^+_e$, $j^-_e\!\nsim j^+_e$,
and so we can choose a partitioning $CS$ of $G_e$ with $v_e(CS)=v^\ast_e$ by compositing optimal partitionings (b), (c), (b) from figure~\ref{fig:planar:g}
for the three subgraphs $G_\beta$, $G_\alpha$, $G_\beta$ respectively. Clearly, these partitionings are consistent with each other; we have $a\nsim b$ and $a'\sim b'$.
\item $i\notin S^\ast$, $j\in S^\ast$. Then $i^-_e\!\nsim i^+_e$, $j^-_e\!\sim j^+_e$,
and so we can choose a partitioning $CS$ of $G_e$ with $v_e(CS)=v^\ast_e$ by compositing optimal partitionings (c), (b), (c) from figure~\ref{fig:planar:g}
for the three subgraphs $G_\beta$, $G_\alpha$, $G_\beta$ respectively. Clearly, these partitionings are consistent with each other; we have $a\sim b$ and $a'\nsim b'$.
\end{itemize}

%If $i\notin S^\ast$ then $i^-_e\nsim i^+_e$, then we can choose a partitioning $CS$ of $G_e$
%with $v_e(CS)=v^\ast_e$ as follows:
%we set $a\sim b$ and either $a'\sim b'$ (if $j^-_e\nsim j^+_e$) or
%$a'\nsim b'$ (if $j^-_e\sim j^+_e$). 
%The case when $j$ is not in the independent set is analyzed similarly.

\vspace{3pt}
\noindent {\bf (b)}
Suppose that $CS$ is feasible for $G_i,G_j,G_e$ and $i,j$ are both in the independent set implying $i^-_e\sim i^+_e$
and $j^-_e\sim j^+_e$. For graphs $G_\beta$ we can only have cases (i) and (ii), therefore $a\nsim b$ and $a'\nsim b'$.
Thus, we have case (iv) for graph $G_\alpha$ - a contradiction.
\end{proof}

\section{Appendix II: Why Baker's Approach~\cite{Baker94} Does Not Yield a PTAS for Planar Graphs}
Baker's approach \cite{Baker94} and other decomposition theorems \cite{Demaine05} have been used to partition vertices or edges of a planar graph and more generally a minor free graph into an arbitrary number of sets such that removing any of these sets results a bounded treewidth graph. Many problems are tractable for bounded treewidth graphs, so this technique is quite powerful for domains where each set in the decomposition is small enough so that deleting it does not cause extreme changes in the optimal solution. One might hope that the same approach would allow obtaining a PTAS for the optimal coalition structure in planar graphs. We show why a direct solution based on Baker's approach can not be used to get a PTAS, and discuss why other decomposition theorems are not helpful in this setting as well. 

We first briefly explain Baker's approach of finding disjoint subsets of edges of the graph such that the deletion of any of these sets of edges results a bounded treewidth graph. 
Take an arbitrary vertex $v$, and consider the BFS tree rooted at $v$. Let $L_i$ be the set of vertices at distance $i$ from $v$, let $l$ be the number of these sets, i.e. $l$ is the maximum distance from $v$. These sets are called ``layers''.  For any integer number $k$, and $i<k$, if we remove the edges between sets $L_{i+jk}$ and $L_{1+i+jk}$ for all $0 \leq j \leq l/k$, the resulting graph would have treewidth $O(k)$ (see~\cite{Demaine04} for a proof). Therefore removing edges between layers $L_{i+jk}$ and $L_{1+i+jk}$ for all $0 \leq j \leq l/k$ gives us a bounded treewidth graph for which we have a polynomial time exact algorithm. 

Let $E_i$ be the set of edges between layers $L_{i+jk}$ and $L_{1+i+jk}$ for all $0 \leq j \leq l/k$. Baker's approach provides $k$ disjoint sets $E_0, E_1, \ldots, E_{k-1}$ such that deleting any $E_i$ gives us an $O(k)$ bounded treewidth graph. If the sets generated by Baker's algorithm were a partition, we would get a PTAS for planar graphs. Unfortunately, these $k$ sets do not form a partition of edges of the graph. They are disjoint, but their union is \emph{not} all the edges, since there may be edges inside each layer which are not in any of these $k$ sets.  We now discuss how such an approach would work if $\{E_i\}_{i=0}^{k-1}$ were a partition and explain where it fails when $\{E_i\}_{i=0}^{k-1}$ is not a partition.

Consider the case where the above $\{E_i\}_{i=0}^{k-1}$ is a partition. For any $i$, we can modify the optimal solution to avoid all edges in set $E_i$, while decreasing the solution's quality. Assume that the optimum solution partitions the vertices of the graph into $t$ sets $S_1, S_2, \ldots, S_t$. Let $E_{opt}$ be the set of edges covered in this optimum solution. When removing the edges in $E_i$ we obtain a graph with some connected components. In each connected component, we can use the optimum solution's partitioning to partition the vertices. We use the projection of the optimum solution's partition in each connected component. In this new partitioning, we cover all edges that the optimum solution covers except the ones in $E_i$. Note that this new partition is a feasible solution in the graph $G \setminus E_i$ (the graph $G$ after removing the edges in $E_i$). Since graph $G \setminus E_i$ has bounded treewidth, we can find its optimum solution in polynomial time. This solution has value at least $\sum_{e \in E_{opt}} w(e) - \sum_{e \in E_{opt} \cap E_i} w(e)$ (the value of the solution derived by projecting the optimum solution in each connected component of $G \setminus E_i$). We can find the optimum solution in $G \setminus E_i$ for each $0 \leq i < k$. The average of these $k$ solutions is at least:
 $
 \frac{\sum_{i=0}^{k-1} \Big{[} \sum_{e \in E_{opt}} w(e) - \sum_{e \in E_{opt} \cap E_i} w(e) \Big{]}}{k} = 
  \frac{k \sum_{e \in E_{opt}} w(e) - \sum_{i=0}^{k-1} \sum_{e \in E_{opt} \cap E_i} w(e)}{k}
 $  
 
Unfortunately, the sets $\{E_i\}_{i=0}^{k-1}$ are disjoint but do not form a partition. An approach based on Baker's algorithm works when $\sum_{i=0}^{k-1} \sum_{e \in E_{opt} \cap E_i} w(e)$ is at most $\sum_{e \in E_{opt}} w(e)$.
%One might say that $\sum_{i=0}^{k-1} \sum_{e \in E_{opt} \cap E_i} w(e)$ is at most $\sum_{e \in E_{opt}} w(e)$ since the sets $E_i$'s are disjoint. 
However, this may not hold since there are negative edges in the graph, and possibly in the optimum solution as well. It might be the case that the positive edges in the optimum solution are shared with sets $E_i$'s, and the negative edges of the optimum solution are not in any set $E_i$. In this case, we can not obtain any good upper bound on $\sum_{i=0}^{k-1} \sum_{e \in E_{opt} \cap E_i} w(e)$, and it may even be larger than $k \sum_{e \in E_{opt}} w(e)$, so we may gain nothing through this approach. 
 
The key problem in the above approach is that the sets of edges in Baker's approach do not form a partition of edges of the graph: they are disjoint sets, but do not span all the edges. In particular, the edges between two consecutive layers are covered by Baker's approach, and the edges inside each layer are not present in any set. Thus, if all edges between two layers have weight $+\infty$ and the edges inside each layer have weight $-\infty$, we can not find any reasonable upper bound on $\sum_{i=0}^{k-1} \sum_{e \in E_{opt} \cap E_i} w(e)$ in terms of the optimum solution.  

Other decomposition theorems, such as~\cite{Demaine05}, are also not very helpful for similar reasons. For example, Theorem 3.1 in~\cite{Demaine05} states that we can partition the vertices of a minor free graph into $k$ sets such that the deletion of any of these sets results a bounded treewidth graph. One might hope there exists one of these $k$ sets whose deletion does not effect the optimal solution by a factor of more than $1-1/k$. A counterexample is when all positive edges in the optimal solution are between these $k$ sets, and the negative edges of the optimal solution are inside these $k$ sets. Again, the absolute value of the weights of all edges may be much larger than the value of the optimal solution. In this case, removing these $k$ sets affects the positive edges more than the negative edges, which can have a huge effect the value of the solution. In particular, if we remove each of these $k$ sets once, every (positive) edge between two sets is ``removed'' twice (for each of its endpoints), and every (negative) edge inside a set is ``removed'' only once.

\end{document}